\documentclass[a4paper,UKenglish,cleveref, autoref, thm-restate]{lipics-v2021}

\usepackage[ruled]{algorithm2e}

\newcommand{\Geqt}{G_{\Delta}}

\pdfoutput=1 
\hideLIPIcs  

\title{Sublinear-Time Reconfiguration of Programmable Matter with Joint Movements} 

\titlerunning{Sublinear-Time Reconfiguration of Programmable Matter with Joint Movements} 

\author{Manish Kumar}{Indian Institute of Technology Ropar, India}{manish.kumar@njit.edu}{https://orcid.org/0000-0003-0620-3303}{}

\author{Othon Michail}{University of Liverpool, United Kingdom}{othon.michail@liverpool.ac.uk}{https://orcid.org/0000-0002-6234-3960}{}

\author{Andreas Padalkin}{Paderborn University, Germany}{andreas.padalkin@upb.de}{https://orcid.org/0000-0002-4601-9597}{}

\author{Christian Scheideler}{Paderborn University, Germany}{scheideler@upb.de}{https://orcid.org/0000-0002-5278-528X}{}

\authorrunning{M. Kumar, O. Michail, A. Padalkin, and C. Scheideler}

\ccsdesc[500]{Computing methodologies~Cooperation and coordination}
\ccsdesc[300]{Theory of computation~Computational geometry}

\keywords{amoebot model, programmable matter, modular robot system, reconfiguration} 

\funding{\textit{Andreas Padalkin} and \textit{Christian Scheideler}: These authors were supported by the DFG Project SCHE 1592/10-1.}
\acknowledgements{We thank Christian Bauer and Leo Decking for implementing the presented algorithms.}

\nolinenumbers 

\begin{document}

\maketitle
\begin{abstract}
We study centralized reconfiguration problems for geometric amoebot structures. A set of $n$ amoebots occupy nodes on the triangular grid and can reconfigure via expansion and contraction operations. We focus on the joint movement extension, where amoebots may expand and contract in parallel, enabling coordinated motion of larger substructures. Prior work introduced this extension and analyzed reconfiguration under additional assumptions such as metamodules.

In contrast, we investigate the intrinsic dynamics of reconfiguration without such assumptions by restricting attention to centralized algorithms, leaving distributed solutions for future work. We study the \emph{reconfiguration problem} between two classes of amoebot structures $A$ and $B$: For every structure $S\in A$, the goal is to compute a schedule that reconfigures $S$ into some structure $S'\in B$. Our focus is on sublinear-time algorithms.

We affirmatively answer the open problem by Padalkin \emph{et al.} (Auton. Robots, 2025) whether a within-the-model sublinear-time universal reconfiguration algorithm is possible, by proving that any structure can be reconfigured into a \emph{canonical} line-segment structure in $O(\sqrt{n}\log n)$ rounds. Additionally, we give a constant-time algorithm for reconfiguring any spiral structure into a line segment. These results are enabled by new constant-time primitives that facilitate efficient parallel movement.

Our findings demonstrate that the joint movement model supports sublinear reconfiguration without auxiliary assumptions. A central open question is whether universal reconfiguration within this model can be achieved in polylogarithmic or even constant time.
\end{abstract}


\section{Introduction}

Programmable matter is made of many small, identical robotic modules that can change the properties of matter, such as its shape, color, or density, in a programmable way~\cite{DBLP:journals/ijhsc/ToffoliM93}. 
In this paper, we consider the reconfiguration dynamics of the geometric amoebot model~\cite{DBLP:series/lncs/DaymudeHRS19,DBLP:journals/dc/DaymudeRS23,DBLP:conf/spaa/DerakhshandehDGRSS14} from a centralized perspective. 
In the distributed version of this model, a set of $n$ robots (called \emph{amoebots}) is placed on a triangular grid. 
Each amoebot can communicate with its neighboring amoebots and move by expanding into an empty adjacent node and then contracting into it. 
Since movement and communication happen locally from node to node, many problems in this model have a natural lower bound of $\Omega(D)$ rounds, where $D$ is the diameter of the structure.

To overcome this limitation, previous work introduced the \emph{joint movement extension}~\cite{DBLP:journals/arobots/PadalkinKS25}.
This extension allows multiple amoebots to move together in a coordinated way.
This extension allows multiple amoebots to move together in a coordinated way.
For example, an amoebot can push or pull other amoebots while expanding or contracting, which enables large parts of the structure to move in parallel.
This raises an important question: Can we reconfigure an amoebot structure much faster using joint movements? By ``faster'' we mean running in sublinear parallel time (measured in discrete rounds), or, if possible, even in polylogarithmic or constant time. In this paper, we are interested in whether this is in principle feasible, thus we naturally focus on the centralized reconfiguration dynamics of the model.

Actuation operations that can globally affect the structure are sometimes called \emph{linear-strength operations}, representing the property that they have sufficient strength to displace up to a linear-sized set.
Although such operations may be a strict assumption to make in some systems, they have been incorporated in a variety of theoretical models \cite{almalki2024geometric,almethen2020pushing,DBLP:journals/comgeo/AloupisCDDFLORAW09,DBLP:conf/isaac/AloupisCDLAW08,kostitsyna2024turning,woods2013active} and are reasonable in several real-world contexts like micro- and nano-robotic systems in low-viscosity environments, DNA nanotechnology, and programmable matter where a single local operation can directly cause reconfiguration of the entire system.
They also form a simple theoretical abstraction for dynamics that, in practice, could be implemented by joint movemements, with multiple robots jointly bearing the load of a linear-strength operation. We use the terms \emph{move} and \emph{movement} to refer to actuation operations throughout the paper.

Aloupis \emph{et al.} \cite{DBLP:journals/comgeo/AloupisCDDFLORAW09,DBLP:conf/isaac/AloupisCDLAW08} studied a model of a modular robot system known as \emph{crystalline robots} \cite{rus2001crystalline}. 
The 2D version of the crystalline model represents modules as squares on a 2D grid, forming a connected shape of modules attached to adjacent modules. Each individual module can expand and contract, by extending one of its faces one unit out and retracting it back at some later point. Due to modules being attached to each other, up to linear-size components can move due to a module's expansion or contraction. 
In \cite{DBLP:conf/isaac/AloupisCDLAW08}, they gave a universal centralized reconfiguration algorithm for the crystalline model that, for any pair of connected shapes $S_I, S_F$ of the same number of modules $n$, can transform $S_I$ into $S_F$ in $O(\log n)$ rounds. 
Note that in the crystalline model reconfiguration of some classes of shapes is impossible without metamodules (or other assumptions), and these algorithms essentially rely on the use of metamodules.

Recent work has showed that joint movements in the geometric amoebot model can also be very powerful when additional assumptions such as metamodules are used~\cite{DBLP:journals/arobots/PadalkinKS25}.
However, it is still unclear how powerful the model is on its own, without relying on extra assumptions.
In this paper, we study the centralized reconfiguration problem in the geometric amoebot model with joint movements, without relying on such additional assumptions, thus focusing on the limits of the model dynamics themselves.

We consider centralized algorithms, where a global scheduler knows the full structure and decides the movements of all amoebots in each round. 
This allows us to focus on the main question: How fast can reconfiguration be in principle? 
Given two classes of amoebot structures, the goal is to design sequences of movements that reconfigure every structure from one class into some structure of the other class. An algorithm in this context is a formal description of these sequences of movements for all structures in the source class.

Our main result is the first within-the-model universal sublinear-time reconfiguration algorithm. 
We show that any connected amoebot structure can be reconfigured into a line segment in $O(\sqrt{n}\log n)$ rounds. 
Since any two connected structures with the same number of amoebots can both be reconfigured into a line segment, and every reconfiguration is reversible, this immediately implies that we can reconfigure any structure into any other structure within the same asymptotic bound. 
Therefore, we obtain universal reconfiguration in $O(\sqrt{n}\log n)$ rounds. 
This answers an open question on whether large-scale reconfiguration can be achieved in sublinear time without additional assumptions \cite{DBLP:journals/arobots/PadalkinKS25}.
We also present an algorithm that reconfigures spiral structures into a line segment in constant time.
This result shows that non-trivial subclasses of paths can be reconfigured extremely fast within the model.
We hope that this can serve as a foundational step toward polylogarithmic or even constant-time universal reconfiguration, which is left as a central open question.

Overall, our results show that joint movements can greatly increase the \emph{reconfiguration efficiency} of programmable matter structures, i.e., how quickly a structure can be reconfigured into another. While the original amoebot model is limited by diameter-based lower bounds, as we show, coordinated parallel motion enables significantly faster global reconfiguration.

\textbf{Organization of the paper.}
In \Cref{sec:Model}, we describe the geometric amoebot model with joint movements and formally define the reconfiguration problem and the classes of structures considered in this paper. 
\Cref{sec:preliminaries} introduces a set of basic movement primitives that will be used as building blocks in our algorithms.
In \Cref{sec:monotone}, we present algorithms for reconfiguring special classes of structures, including monotone, histogram, and star-convex structures, into simpler canonical forms such as a line segment.
\Cref{sec:universal} describes our main result: a universal reconfiguration algorithm that reconfigures any connected structure into a line segment in sublinear time.
\Cref{sec:spirals} presents algorithms for reconfiguring spiral structures into a line segment, including a constant-time solution.
Finally, \Cref{sec:conclusion} discusses open problems.


\section{Model and Problem Statement}
\label{sec:Model}

In this section, we formally present the model and the problem statement.
In \Cref{sec:model:extension}, we present the geometric amoebot model with joint movements.
In \Cref{sec:classes}, we define the reconfiguration problem and the classes of amoebot structures considered in this paper.


\subsection{Geometric Amoebot Model with Joint Movements}
\label{sec:model:extension}



In this section, we present the geometric amoebot model \cite{DBLP:journals/dc/DaymudeRS23,DBLP:conf/spaa/DerakhshandehDGRSS14} with the joint movement extension \cite{DBLP:journals/arobots/PadalkinKS25}.
%
The amoebot structure consists of $n$ amoebots placed on the infinite regular triangular grid graph $\Geqt = (V_\Delta, E_\Delta)$ (see the left side of \Cref{fig:model}).
Each amoebot either occupies a single node or two adjacent nodes and the edge between them.
We call an amoebot \emph{contracted} if the amoebot occupies a single node and \emph{expanded} otherwise.
Every node of $\Geqt$ is occupied by at most one amoebot.
Adjacent amoebots are connected by \emph{bonds}.

\begin{figure}[tb]
    \centering
    \includegraphics{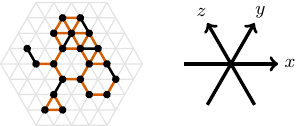}
    \caption{
        Amoebot model and axes.
        The black nodes and edges indicate the nodes and edges occupied by amoebots.
        The red edges indicate bonds between the amoebots.
        The gray edges indicate the triangular grid $\Geqt$.
        We omit the grid in all other figures.
    }
    \label{fig:model}
\end{figure}


Let $C$ denote the set of nodes occupied by contracted amoebots, $X$ the set of nodes occupied by expanded amoebots, $A$ denote the set of all edges occupied by expanded amoebots, and $B$ the set of all bonds.
We define the \emph{connectivity graph} of an amoebot structure as the graph $S = (C \cup X, A \cup B)$.
The edges of $A$ are pairwise disjoint.
We assume that initially, $S$ is connected.
%
%
To each edge in $A \cup B$, we assign a geometric orientation.
We only allow orientations parallel to the axes of the triangular grid $\Geqt$ (see the right side of \Cref{fig:model}).
The \emph{amoebot structure} is defined by its connectivity graph and the assignment of orientations.
For the sake of simplicity, we will denote each amoebot structure by its connectivity graph.


We assume the \emph{fully synchronous} activation model, i.e., time is divided into synchronous rounds, and every amoebot is active in each round.
In each round, each amoebot can perform a single movement which is performed in two steps.
Let $S_i = (C_i \cup X_i, A_i \cup B_i)$ be the amoebot structure at the beginning of the round.


In the \textbf{first step},
the amoebots remove bonds from $S_i$ as follows.
Each amoebot can decide to release an arbitrary subset of its currently incident bonds in $S_i$.
A bond is removed if and only if one of the incident amoebots releases the bond.
Note that we only remove bonds (i.e., edges in $B_i$) but no occupied edges (i.e., edges in $A_i$).
Let $R_i \subseteq B_i$ be the set of the remaining bonds and $S'_i = (C_i \cup X_i, A_i \cup R_i)$ the resulting amoebot structure.
We require that $S'_i$ is connected since otherwise, disconnected parts might float apart.
We say that a \emph{connectivity conflict} occurs if and only if $S'_i$ is not connected.
Whenever a connectivity conflict occurs, the amoebot structure transitions into an undefined state such that we become unable to make any statements about the structure.


In the \textbf{second step},
each amoebot may perform one of the following movements.
A contracted amoebot $c$ occupying $c_0 \in C_i$ may \emph{expand} on one of the axes of the grid (see \Cref{fig:model:expansion}).
For that, we replace $c_0$ by two nodes $c_1,c_2$ and an edge $a = \{ c_1, c_2 \}$, and assign an orientation to $a$.
For each to $c$ incident bond $b \in R_i$, we replace $c_0$ with one of the new nodes.
Note that the incident bonds do not change their orientations.
As a result, all connected amoebots move with the expanding amoebot.
An expanded amoebot may \emph{contract} analogously by reversing the contraction (see \Cref{fig:model:contaction}).

\begin{figure}[tb]
    \begin{minipage}[t]{.29\linewidth}
        \centering
        \includegraphics{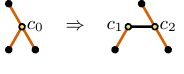}
        \subcaption{Expansion.}
        \label{fig:model:expansion}
    \end{minipage}
    \hfill
    \begin{minipage}[t]{.29\linewidth}
        \centering
        \includegraphics{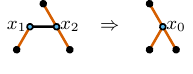}
        \subcaption{Contraction.}
        \label{fig:model:contaction}
    \end{minipage}
    \hfill
    \begin{minipage}[t]{.38\linewidth}
        \centering
        \includegraphics{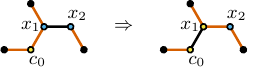}
        \subcaption{Handover.}
        \label{fig:model:handover}
    \end{minipage}
    \caption{
        Movements.
        The yellow amoebot in (a) and (c) indicates amoebot $c$.
        The blue amoebot in (b) and (c) indicates amoebot $x$.
    }
\end{figure}

Furthermore,
pairs of amoebots may perform isolated \emph{handovers} as follows (see \Cref{fig:model:handover}).
Consider a contracted amoebot $c$ occupying $c_0 \in C_i$ and an expanded amoebot $x$ occupying $x_1,x_2 \in X_i$ and $a = \{ x_1, x_2 \} \in A_i$ that are connected by a bond $b = \{ c_0, x_1 \} \in R_i$.
Intuitively, we want to switch the association of the expanded amoebot and the bond.
More precisely,
(i) amoebot $c$ becomes expanded and occupies nodes $c_0$, $x_1$, and edge $b$,
(ii) amoebot $x$ becomes contracted and occupies node $x_2$, and
(iii) edge $a$ becomes a bond.

At the end of the movements, let $C_{i+1}$ denote the set of nodes occupied by contracted amoebots, and $X_{i+1}$ the set of nodes occupied by expanded amoebots, $A_{i+1}$ the set of all edges occupied by expanded amoebots, and $R'_i$ the set of all bonds.
Let $S''_i = (C_{i+1} \cup X_{i+1}, A_{i+1} \cup R'_i)$ be the resulting connectivity graph.
We require that $S''_i$ is a subgraph of $\Geqt$ in compliance with the orientations of all edges (i.e., $A_{i+1} \cup R'_i$).
We say that a \emph{collision} occurs if and only if $S''_i$ is not a subgraph of $\Geqt$, i.e., either the amoebots cannot be mapped to the triangular grid or multiple amoebots are mapped to the same node.
Whenever a collision occurs, the amoebot structure transitions into an undefined state such that we become unable to make any statements about the structure.
We, as algorithm designers, are responsible to make sure collisions do not happen.
In the next round, we continue with the \emph{adjacency closure} $S_{i+1} = (C_{i+1} \cup X_{i+1}, A_{i+1} \cup B_{i+1})$ of $S''_i$, i.e., we add bonds between amoebots until all adjacent amoebots are connected.




In this paper, we assume that we have a centralized scheduler.
The scheduler knows the current state of the amoebot structure at all times.
At the beginning of each synchronous round, it decides for each amoebot which bonds to release, and which movement to perform.
Due to our focus being on understanding the feasibility of sublinear-time dynamics within the model (i.e., no additional assumptions in contrast to previous work which used metamodules), we leave the design of distributed solutions for future work.



\subsection{Problem Statement and Classes of Amoebot Structures}
\label{sec:classes}


In this paper, we consider the \emph{reconfiguration problem} between two classes $A$ and $B$ of amoebot structures of the same size.
The goal of the problem is to develop an algorithm that for every amoebot structure $S \in A$ computes a schedule that reconfigures $S$ into any amoebot structure $S' \in B$.
Note that an algorithm may map the amoebot structures of $A$ to only a subset of $B$ and not the whole class.
Hence, in general, it is not possible to use the same algorithm with reversed schedules to solve the reverse problem.
In the remainder of this paper, we will consider the following classes (see \Cref{fig:classes}, which also shows our results).
We assume that all structures in the classes only consist of contracted amoebots.
Note that each class is closed under translation, rotation, and reflection.

\begin{figure}[tb]
    \centering
    \includegraphics[scale=.55]{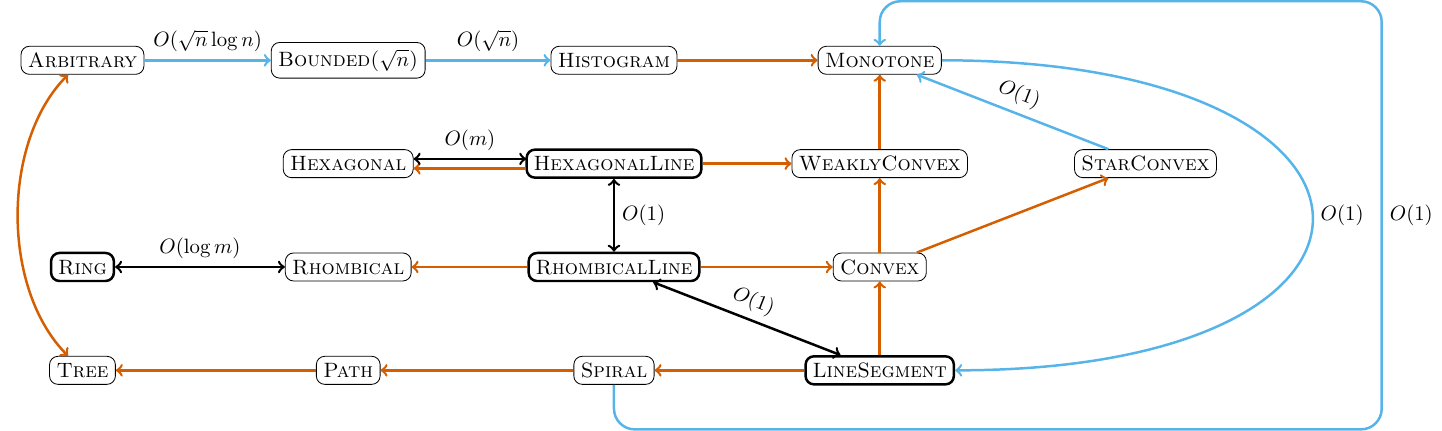}
    \caption{
        Results.
        Classes with a thicker frame indicate canonical classes.
        Red arrows point to superclasses and with that indicate trivial solutions.
        Note that we do not depict all superclasses.
        Black arrows indicate results by Padalkin \emph{et al.} \cite{DBLP:journals/arobots/PadalkinKS25}.
        Note that their superconstant algorithms simulate algorithms from \cite{DBLP:conf/isaac/AloupisCDLAW08,DBLP:journals/arobots/HurtadoMRA15}.
        They define rhombical and hexagonal metamodules.
        {\sc Rhombical} ({\sc Hexagonal}) contains all amoebot structure that consist of rhombical (hexagonal) metamodules.
        {\sc RhombicalLine} ({\sc HexagonalLine}) contains all amoebot structure forming a line of rhombical (hexagonal) metamodules.
        {\sc Ring} contains canonical amoebot structures forming a ring (see \cite{DBLP:conf/isaac/AloupisCDLAW08} for details).
        $m$ denotes the number of metamodules.
        Blue edges indicate our results.
        Note that our algorithm that reconfigures bounded structures into histograms simulates an algorithm from \cite{DBLP:journals/comgeo/AloupisCDDFLORAW09}.
    }
    \label{fig:classes}
\end{figure}

\smallskip


\noindent{\sc Arbitrary}:
All structures.
\smallskip


\noindent{\sc Tree}:
Structures forming a tree.
Note that this class is equal to {\sc Arbitrary} since each structure has a spanning tree.
\smallskip


\noindent{\sc Path}:
Structures forming a path.
\smallskip


\noindent{\sc Spiral}:
Structures forming a simple spiral with $120^\circ$ corners.
Note that a single line segment is a one-segmented spiral.
\smallskip


\noindent{\sc LineSegment}:
Structures that form a line segment.
\smallskip


\noindent{\sc Monotone}:
We call the intersection of the amoebot structure with a line parallel to the $x$-axis an \emph{$x$-section}.
Note that an $x$-section is not necessarily connected.
We define \emph{$y$- and $z$-sections} analogously.
In particular, we also call $x$-sections \emph{rows}, and $y$-sections \emph{columns}.
We call an amoebot structure \emph{$x$-monotone} if and only each $x$-section is connected.
We define $y$- and $z$-monotone analogously.
We call an amoebot structure \emph{monotone} if and only if it is $x$-monotone, $y$-monotone, or $z$-monotone.
\smallskip


\noindent{\sc Histogram}:
We call a line segment of amoebots parallel to the $x$-axis an \emph{$x$-segment}.
We define \emph{$y$- and $z$-segments} analogously.
We say that an amoebot structure forms a \emph{histogram} if and only if it consists of an $x$-segment with $y$-segments attached to one side of the $x$-segment, or can be obtained from such a structure by reflections and rotations.
Note that every histogram is monotone.
\smallskip


\noindent{\sc Bounded($k$)}:
Structures with at most $k$ non-empty $x$-, $y$-, or $z$-sections.
\smallskip


\noindent{\sc Convex}:
We call an amoebot structure \emph{convex} if and only if for each pair of amoebots $u,v$, all shortest paths between $u$ and $v$ are part of the amoebot structure.
\smallskip


\noindent{\sc WeaklyConvex}:
We call an amoebot structure \emph{weakly convex} if and only if for each pair of amoebots $u,v$, at least one shortest path between $u$ and $v$ is part of the amoebot structure.
\smallskip


\noindent{\sc StarConvex}:
We call an amoebot structure \emph{star-convex} if and only if there is an amoebot $u$ such that for each amoebot $v$, all shortest paths between $u$ and $v$ are part of the amoebot structure \cite{DBLP:conf/wdag/ArtmannPS25}.
\smallskip


Another interesting variant of the \emph{reconfiguration problem} is the one within a class $A$.
The goal of this variant is to develop an algorithm that for any two amoebot structures $S, S' \in A$ (with the same number of amoebots) computes a schedule that reconfigures $S$ into $S'$.
We obtain a universal reconfiguration algorithm if we solve the problem for the class {\sc Arbitrary}.
A common way to design such an algorithm is to use a canonical structure $C$ as an intermediate structure during the reconfiguration (e.g., \cite{DBLP:journals/comgeo/AloupisCDDFLORAW09,DBLP:conf/isaac/AloupisCDLAW08,DBLP:journals/arobots/HurtadoMRA15}).
Given schedules from $S$ and $S'$ to $C$, we can obtain a schedule from $S$ to $S'$ by first executing the schedule from $S$ to $C$ and then the reverse schedule from $S'$ to $C$.
In this paper, we utilize a line segment as the canonical structure.
Our focus therefore is to solve the reconfiguration problem between various classes and the class {\sc LineSegment}.
Our main results are a universal reconfiguration algorithm that requires $O(\sqrt n \log n)$ rounds and an algorithm that reconfigures a spiral into a line in $O(1)$ rounds.
Further results can be found in \Cref{fig:classes}.



\section{Preliminaries}
\label{sec:preliminaries}

In this section, we present basic movement primitives: the tunneling, shearing, parallelogram, triangle, and trapezoid primitive.
These are core components in our reconfiguration algorithms.

In the tunneling primitive, originally defined in \cite{DBLP:conf/dna/DerakhshandehGS15}, we move a chain of amoebots along a path by constantly performing handovers.%
\footnote{In previous work, the primitive is defined for spanning forests and is therefore called the spanning forest primitive. Since we only need the simpler case of a path, we refer to \cite{DBLP:series/lncs/DaymudeHRS19,DBLP:conf/dna/DerakhshandehGS15} for the general case.}
\Cref{alg:tunneling} shows the algorithm performed by each amoebot.
We will use this primitive to ``tunnel'' an amoebot from one end to the other end without changing the structure of the remaining chain.
In general, it may take up to $O(n)$ rounds to tunnel the amoebot through the chain.
We call a chain that alternates between contracted and expanded amoebots \emph{alternating}.
On alternating chains, the primitive only requires $3$ rounds (see \Cref{fig:tunneling}).
Note that the primitive actually moves all amoebots along the path without changing the order.
However, since the amoebots are identical, the outcome is equivalent to a real tunneling.
By construction, we obtain the following lemma.


\begin{algorithm}[tb]
\caption{Tunneling Primitive \cite{DBLP:series/lncs/DaymudeHRS19}}
\label{alg:tunneling}
\If{contracted}{
    \If{last amoebot of chain}{
        contract
    }\Else{
        perform handover with the predecessor
    }
}\Else{
    \If{first amoebot of chain}{
        expand along the path
    }\Else{
        perform handover with the successor
    }
}
\end{algorithm}


\begin{figure}[tb]
    \centering
    \includegraphics{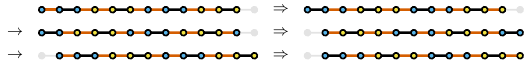}
    \caption{
        Tunneling primitive.
        In each round, adjacent amoebots of the same color perform a handover.
        The gray node is unoccupied.
        In all figures, a simple arrow between two subfigures indicates the execution of the first step of a round (i.e., the removal of bonds) after establishing the adjacency closure, and a double arrow the execution of the second step of a round (i.e., the movements) unless stated otherwise.
        Note that only the fully completed rounds (i.e., both steps were executed) count to the total number of rounds.
    }
    \label{fig:tunneling}
\end{figure}


\begin{lemma}
\label{lem:tunneling}
    On alternating chains, the tunneling primitive requires $3$ rounds to tunnel an amoebot through the chain.
\end{lemma}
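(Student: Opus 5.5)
The plan is to verify directly, round by round, the execution of \Cref{alg:tunneling} on an alternating chain. Index the amoebots of the chain as $p_1,\dots,p_k$ along the path, from the first amoebot (which performs the initial expansion) to the last, so that consecutive amoebots alternate between contracted and expanded. I would fix the convention that the path continues beyond $p_1$ into an unoccupied node $v$; this is the gray node in \Cref{fig:tunneling}. In every round, the first step releases every bond that is not an edge of the chain, and keeps exactly the chain bonds. The chain itself stays connected, so $S'_i$ is connected and no connectivity conflict occurs.

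Next I would track the invariant that the alternation is preserved, up to a shift, after each round. Each expanded $p_j$ with $j>1$ performs a handover with its contracted successor $p_{j+1}$. Each contracted $p_j$ that is not last performs a handover with its expanded predecessor $p_{j-1}$. These two rules describe the same pairing $(p_{j-1},p_j)$ with $p_{j-1}$ expanded and $p_j$ contracted, so the pairs are disjoint and well defined. Moreover, a handover changes only the roles of the nodes on the path and moves no node, so the occupied node set is unchanged by these operations. The only geometric changes come from the ends of the chain: an expanded $p_1$ expands into $v$ along the path, and a contracted last amoebot contracts (as a no-op if it is already contracted).

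I would then treat the two parity cases for the endpoints, exactly as in \Cref{fig:tunneling}. In the first case $p_1$ is expanded. Then in round~1 all amoebots are matched into handover pairs, except possibly a contracted last amoebot, which is idle. This makes the alternation flip: $p_1$ becomes contracted and $p_2$ expanded, and so on. In round~2, $p_1$ is contracted; since it is the first amoebot, it has no predecessor and I would let it expand into $v$ (in the figure this is the step in which the structure grows by one node). In round~3 the last amoebot contracts, vacating its tail node. Checking the handover orientations shows that $p_1$'s expansion pushes no one and pulls no one along the fixed path, so no collision arises. The net effect after $3$ rounds is that the occupied nodes have shifted by one position along the path and the chain is again alternating. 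Since amoebots are indistinguishable, this is equivalent to tunneling one amoebot from the tail to $v$. The second case, with $p_1$ contracted, follows by the same check with the phases shifted.

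The main obstacle is making precise the informal pseudocode, in particular what ``expand along the path'' means for the first amoebot, and confirming that concurrent handovers and an end expansion in the same round cause no collision under the joint-movement semantics. A handover keeps all nodes fixed, whereas an expansion that keeps incident bonds could translate the rest of the chain. To handle this, I would argue that the expanding amoebot keeps its bond on the chain side, and that the new node $v$ is free because it is the target of the tunnel, so the only motion is into $v$. Once this is checked, the bound of $3$ rounds follows by direct inspection, which is what the paper means by ``by construction.''
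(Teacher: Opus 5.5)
Your plan is the same ``by construction'' verification the paper relies on in \Cref{fig:tunneling}: handovers move no node, so only the two ends of the chain change the occupied set. However, the endpoint bookkeeping, which is the only substantive part, is wrong as written.

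In your second paragraph you take the end rules of \Cref{alg:tunneling} literally (``an expanded $p_1$ expands'', ``a contracted last amoebot contracts''). Neither is a meaningful move, and your claim that the two handover rules induce one consistent pairing fails exactly at $p_1$ and $p_k$. As printed, the end cases are attached to the wrong state: a \emph{contracted} first amoebot should expand into $v$, and an \emph{expanded} last amoebot should contract.

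The misreading carries into your case analysis. With $p_1$ expanded, $p_k$ is contracted if and only if $k$ is even, and then it must be matched with $p_{k-1}$. If it idles, as your description of round~1 allows, then $p_{k-1}$ has no partner and $p_k$ is still contracted after round~1. The contraction in round~3 then vacates nothing, so after three rounds the chain has merely grown into $v$. An unmatched amoebot exists only when $k$ is odd, and then it is expanded.

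The repair is short. In round~1, perform the handovers $(p_1,p_2),(p_3,p_4),\dots$ and let $p_k$ idle if $k$ is odd. In both parities (for $k\ge 2$), $p_1$ is then contracted and $p_k$ expanded. Rounds~2 and~3 (expand $p_1$ into $v$, then contract $p_k$ onto its node adjacent to $p_{k-1}$) therefore shift the occupied node set by one position along the path. That shift is all the lemma, and its use in \Cref{lem:arbitrary2bounded}, requires.

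A few further corrections:
\begin{itemize}
\item Drop the claim that the chain is ``again alternating''. Under your schedule $p_1$ and $p_2$ are both expanded after round~2 and remain so for $k\ge 3$.
\item Spell out the case where $p_1$ is contracted instead of appealing to shifted phases. Here $p_1$ can expand immediately, and $p_k$ needs at most one round of handovers to become expanded before it contracts.
\item If all corrected rules are applied simultaneously, the shift in fact completes within two rounds in every parity case, so $3$ is a safe bound.
\item Releasing every non-chain bond keeps $S'_i$ connected only when the chain is the entire structure. Inside a larger structure you must keep bonds at nodes that do not move, as the paper does with the gray nodes in \Cref{fig:universal:tunneling}.
\end{itemize}
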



We perform all remaining primitives on substructures of the amoebot structure.
Other parts of the amoebot structure may be connected to that substructure at specific amoebots which we call \emph{connection points}.
Each primitive has two of these connection points.
As the parts are not necessarily connected without the substructure, we must keep it connected at all times.
Hence, we must keep the substructure connected at all times.
With the exception of the shearing primitive, we maintain the relative position between those connection points.


In the shearing primitive, we shear a line segment of amoebots to another axis.%
\footnote{%
This primitive is similar to the realignment primitive of \cite{DBLP:journals/arobots/PadalkinKS25}.
In fact, both primitives use the same intermediate structure.
This allows us to switch between all structures of both primitives.
}
We emphasize that this is not a rotation of the line segment since the attached parts of the amoebot structure are not rotated with the line segment.
The endpoints of the line serve as the connection points.

We realize the primitive as follows (see \Cref{fig:shearing}).
The primitive consists of two rounds.
In the first round, we shear the line halfway by expanding every second amoebot.
In the second round, we mirror the first round to shear the line segment to the new axis.
Note that from the perspective of a connection point, the shearing primitive is performed within a triangle defined by the initial and final location of the line segment (see the yellow area in \Cref{fig:shearing}).
By construction, we obtain the following lemma.

\begin{figure}[tb]
    \centering
    \includegraphics{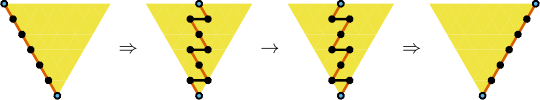}
    \caption{
        Shearing primitive.
        The blue amoebots indicate the connection points.
        The yellow area indicates the triangle within the primitive is performed from the perspective of the bottom connection points.
    }
    \label{fig:shearing}
\end{figure}

\begin{lemma}
\label{lem:shearing}
    Our implementation of the shearing primitive requires $2$ rounds.
\end{lemma}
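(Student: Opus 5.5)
The plan is to describe the two rounds explicitly in coordinates and check that each is a legal round of the model: no connectivity conflict, no collision, and the line segment ends up aligned with the new axis. Place the segment of contracted amoebots $v_0,\dots,v_{k}$ on the $x$-axis, $v_j$ at $j\cdot e_x$, with connection points $v_0$ and $v_k$. The target is the segment along a second axis, say $e_y$, where $e_x, e_y$ and $e_y - e_x$ are grid directions at $60^\circ$ to each other. Throughout, $v_0$ is kept fixed as the reference point.

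\textbf{Round 1.} No bonds along the chain are released; only bonds to other parts of the structure are released, except those attached at the connection points. Every second amoebot $v_{2i+1}$ expands in direction $e_y - e_x$, and its bond to $v_{2i+2}$ is moved onto the new node. Each expanded amoebot now covers one grid edge and the bond to the next one covers another. Going from $v_0$, each pair of consecutive original positions therefore advances by the vector $e_x + (e_y - e_x) = e_y$. So after the round the chain is a zigzag, and $v_{2i}$ lies at $i\,e_y + i\,e_x$ up to the fixed bookkeeping of which endpoint carries the bond. This is the "halfway" sheared intermediate structure (the same as the one in the realignment primitive). Since all bonds along the chain remain, $S'_i$ is connected. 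What must be checked is that the zigzag occupies distinct grid nodes, which holds because consecutive nodes alternate between two adjacent parallel lines and advance strictly monotonically.

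\textbf{Round 2.} Mirror round 1. Each expanded amoebot contracts in such a way that the bond previously along $e_x$ collapses into direction $e_y$. Equivalently, read backwards from the target: the target line sheared halfway yields the same zigzag, so round 2 is the time reversal of round 1 applied to the target segment. After contraction all amoebots are contracted on the $y$-line, and $v_k$ lies at $k\,e_y$ relative to $v_0$.

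The main obstacle I anticipate is making sure no collision occurs with the attached parts of the structure. From the perspective of each connection point, all intermediate positions stay inside the triangle spanned by the initial and final segments. I would verify this by showing that every node's coordinate is a convex combination in that triangle at each stage. Collisions with the rest of the structure are not part of this lemma, though; they are handled where the primitive is applied. For the lemma itself, counting rounds gives exactly $2$.
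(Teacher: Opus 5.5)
You follow the paper's construction: expand every second amoebot to reach a halfway structure, then mirror. Your round-1 computation is correct. With $P_i = i(e_x+e_y)$, amoebot $v_{2i}$ sits at $P_i$ and $v_{2i+1}$ occupies the edge between $P_i+e_x$ and $P_i+e_y$.

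The gap is in round 2, which is the only non-trivial point of the lemma. In this model, bonds never change orientation during a movement, so a bond ``along $e_x$'' cannot ``collapse into direction $e_y$''. If each $v_{2i+1}$ contracts with only its round-1 bonds in place, you merely undo round 1 and return to the $x$-line. What makes the mirror possible is the adjacency closure at the end of round 1. It bonds $v_{2i}$ at $P_i$ to the node $P_i+e_y$ of $v_{2i+1}$, and $v_{2i+2}$ at $P_{i+1}$ to the node $P_i+e_x$, both with orientation $e_y$. In the first step of round 2 you must therefore release the $e_x$-oriented chain bonds $\{P_i, P_i+e_x\}$ and $\{P_i+e_y, P_{i+1}\}$. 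Keeping them would give $v_{2i+1}$ two bonds to $v_{2i}$ with different orientations, so its contraction would cause a collision. The remaining $e_y$-bonds together with the expanded edges still form a path, so there is no connectivity conflict. Contracting each $v_{2i+1}$ then places $v_j$ at $j e_y$ relative to $v_0$.

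Your remark that the target line sheared halfway ``yields the same zigzag'' is the right intuition. Both halfway structures occupy exactly the same nodes and the same expanded edges. They differ only in which pair of opposite sides of each unit rhombus (vertices $P_i$, $P_i+e_x$, $P_{i+1}$, $P_i+e_y$) carries the bonds. Your time-reversal argument only becomes a proof once you state this bond exchange explicitly.

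A minor point: the intermediate structure is this chain of rhombi, not a zigzag on two parallel lines. Distinctness of its nodes is still immediate. Writing nodes as $a e_x + b e_y$, the value $a+b$ is $2i$ at $v_{2i}$ and $2i+1$ at both (different) nodes of $v_{2i+1}$.
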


\begin{remark}
    The shearing primitive can be generalized to parallelograms of contracted amoebots.
\end{remark}



Consider a parallelogram where only the two sides incident to an obtuse corner are occupied by amoebots.
Let $\ell_1$ and $\ell_2$ denote the side lengths of the parallelogram such that w.l.o.g., $\ell_1 \leq \ell_2$.
The amoebots at the acute corners serve as the connection points.
The goal of the parallelogram primitive is to move the amoebots to the other two sides without leaving the parallelogram and without changing the relative positions of the connection points at any time.
Note that these sides require the same number of amoebots as the initially occupied sides.

We realize the primitive as follows (see \Cref{fig:parallelogram}).
On both sides of the parallelogram, we apply the shearing primitive on the the first $\ell_1 + 1$ amoebots starting from the connection point, respectively.
Note that in case the parallelogram is a rhombus, the $(\ell + 1)$-st amoebot of both sides is identical and participates in both shearing primitives.
This is possible since the necessary movements of both shearing primitives match.
By construction, we obtain the following lemma.

\begin{figure}[tb]
    \centering
    \includegraphics{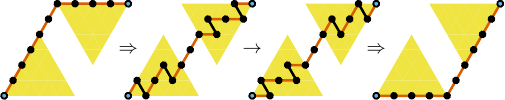}
    \caption{
        Parallelogram primitive.
        The blue amoebots indicate the connection points.
    }
    \label{fig:parallelogram}
\end{figure}

\begin{lemma}
\label{lem:parallelogram}
    Our implementation of the parallelogram primitive requires $2$ rounds.
\end{lemma}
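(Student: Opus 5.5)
The plan is to reduce the statement to two simultaneous applications of \Cref{lem:shearing}, which costs $2$ rounds, and then to check that the two applications are compatible.

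\textbf{Setup.} Let the parallelogram have acute corners $p$ and $q$, which are the connection points, and obtuse corners $o$ (occupied) and $o'$ (empty). Side $po$ and side $oq$ are occupied. Assume $|po| = \ell_1 \le \ell_2 = |oq|$. The target configuration occupies sides $po'$ and $o'q$, where $|po'| = \ell_2$ and $|o'q| = \ell_1$. I view the move as replacing the path $p \to o \to q$ by the path $p \to o' \to q$. Along side $oq$ I choose the point $r$ at distance $\ell_1$ from $o$ (equivalently at distance $\ell_2-\ell_1$ from $q$). The segment from $r$ to $q$ has the same position in the initial and in the target configuration, since it lies on the line through $o$ and $q$, which contains the side $o'q$ translated accordingly. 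So after the move only the rhombus with corners $p, o, r$ and the corresponding fourth corner has to change. This matches the construction in the text, which shears the first $\ell_1+1$ amoebots of each side.

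\textbf{Main step.}
1. Apply the shearing primitive to the segment $p\ldots o$, with $p$ as the fixed end.
2. Apply it simultaneously to the segment $o\ldots r$, with $r$ as the fixed end.

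From the perspective of $p$, the first segment sweeps through the triangle $p, o, o'$. From the perspective of $r$, the second segment sweeps through the rhombus half $r, o, \cdot$. Both stay inside the parallelogram. The amoebot at $o$ is shared when $\ell_1 = \ell_2$, and more generally it is the endpoint of both sheared segments. I have to verify that its prescribed movements in the two shearings coincide. In each round it must perform a single expansion or contraction along one axis, and by the symmetry of the two shearings (mirror images across the diagonal $o o'$) the displacement required of $o$ is the same in both. The parity of which amoebots expand (``every second amoebot'') must be aligned so that this holds, and I would choose the alternation starting from $o$ in both segments.

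\textbf{Connection points and collisions.} The shearing keeps $p$ fixed relative to the first segment and $r$ fixed relative to the second. Since $r\ldots q$ is untouched, $p$ and $q$ keep their relative position. Bonds are released only as in the shearing primitive, so the structure stays connected. Collisions are excluded because the two sheared regions are the two triangles of the rhombus on opposite sides of the diagonal, which meet only along that diagonal. Moreover, the intermediate configurations of the two shearings lie on that diagonal only at the midpoint stage, and there they are determined by the same shared amoebots. This compatibility check at $o$ and along the diagonal is the main obstacle. I would verify it by explicitly writing out the intermediate (halfway-sheared) configuration of both shearings and checking that the two coincide on the diagonal. The total cost is the $2$ rounds of \Cref{lem:shearing}, run in parallel.
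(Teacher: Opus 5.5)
You reduce the lemma to two parallel shearings, which is the right idea, but you shear the wrong piece of the longer side. As a result, for $\ell_1<\ell_2$ your two rounds do not end in the target configuration. Let $u$ and $v$ be the unit vectors along $po$ and $oq$. Then $o=p+\ell_1u$, $q=p+\ell_1u+\ell_2v$, $o'=p+\ell_2v$, and your $r=p+\ell_1(u+v)$. Your claim that $r\ldots q$ occupies the same nodes before and after is false. This segment lies on the line through $o$ and $q$, but the target side $o'q$ is parallel to $po$, so that line meets the target only in $q$.

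Carrying out your two shearings, $o$ moves to $p+\ell_1v$, the piece $o\ldots r$ becomes parallel to $u$ and ends at the unmoved $r$, and $r\ldots q$ stays put. The result is the staircase $p\to p+\ell_1v\to p+\ell_1(u+v)\to q$. In effect you have performed the rhombus case of the primitive on the sub-rhombus $p,o,r,p+\ell_1v$. For example, with $\ell_1=1$ and $\ell_2=2$ you end with the nodes $p,\,p+v,\,p+u+v,\,p+u+2v$ instead of $p,\,p+v,\,p+2v,\,p+u+2v$. So your construction also does not match the text. There, the first $\ell_1+1$ amoebots of each side are counted \emph{starting from the connection point}, i.e., from $q$ on the long side, not from $o$.

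The missing idea is that no part of the long side stays in place: the piece of length $\ell_2-\ell_1$ next to $o$ must be translated. The paper shears $p\ldots o$ with pivot $p$ and, in parallel, $q\ldots r'$ with $r'=q-\ell_1v$ and pivot $q$. Then $o$ moves by $\ell_1(v-u)$, and the middle piece $o\ldots r'$ is carried rigidly by the same vector onto side $po'$; its end lands on $r'+\ell_1(v-u)=o'$. Meanwhile $q-r'$ changes from $\ell_1v$ to $\ell_1u$, which cancels that vector, so $p$ and $q$ keep their relative position.

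The two sheared pieces are disjoint unless $\ell_1=\ell_2$. That is the only case where the shared-amoebot compatibility you identify as the main obstacle arises, and there $r=q$, so your construction coincides with the paper's. Your collision argument would need to be redone for this geometry. The swept regions are the triangle $p,o,p+\ell_1v$, the parallelogram between $o\ldots r'$ and its translate, and the triangle $q,r',o'$; together they tile the original parallelogram.
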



Consider an equilateral triangle of side length $\ell$ where only two sides (w.l.o.g., the legs) are occupied by amoebots.
The amoebots at the corners incident to the unoccupied side (i.e., the base) serve as the connection points.
The goal of the triangle primitive is to occupy the base and one of the legs without leaving the triangle and without changing the relative positions of the connection points at any time.
Note that any two sides of the triangle require the same number of amoebots since the triangle is equilateral.


We cannot simply shear one leg to the base without disconnecting the amoebots within the triangle (e.g., by shearing one leg directly to the base) or changing the relative positions of the connection points (e.g., by first shearing one leg to the other leg so that both legs become adjacent and then shearing it to the base).
Instead, we realize the primitive as follows.
W.l.o.g., we assume that the legs are aligned to the $y$- and $z$-axis.
Our implementation consists of $4$ phases.
We first outline the phases.
In the first phase, we iteratively reduce the size of the triangle (i.e., the length of its sides) by $1$ until $\ell = 1$ or there is an even $k$ such that $\ell = 4 k + 1$.
In the second phase, we occupy the base.
However, for that, we move out of both legs.
In the third phase, we occupy all nodes adjacent to the base within the triangle.
In the fourth phase, we move those amoebots to one of the legs.


Consider the \textbf{first phase}, i.e., the iterative reduction of the triangle (see \Cref{fig:triangle:1}).
We pair the amoebots on one leg starting from the top except for the last $2$ amoebots if $\ell$ is odd and the last amoebot if $\ell$ is even.
Then, we use the tunneling primitive on each pair to fill the nodes adjacent to the leg.
We place the two excess amoebots at the bottom of the original leg.
This will become important in the fourth phase. We ignore them for now.

We stop once $\ell = 1$ or there is an even $k$ such that $\ell = 4 k + 1$.
If the former holds, we directly go to the fourth phase.
Otherwise, we continue with the second phase.

\begin{figure}[tb]
    \begin{minipage}[t]{\linewidth}
        \centering
        \includegraphics{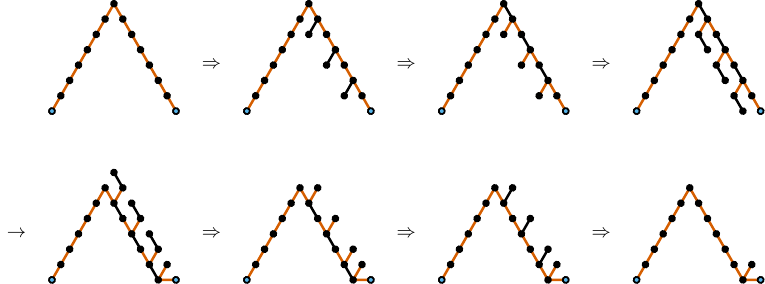}
        \subcaption{Case $\ell$ is odd.}
    \end{minipage}
    
    \begin{minipage}[t]{\linewidth}
        \centering
        \includegraphics{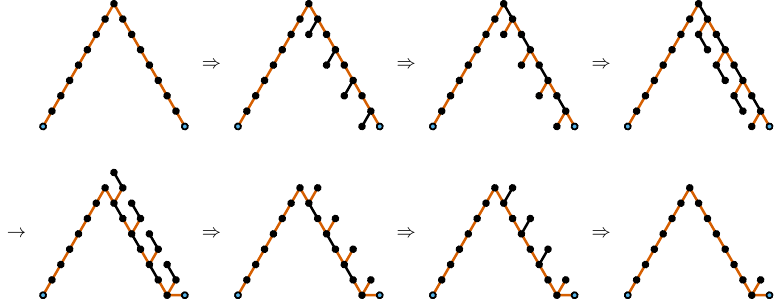}
        \subcaption{Case $\ell$ is even.}
    \end{minipage}
    \caption{
        First phase of the triangle primitive.
        The blue amoebots indicate the connection points.
    }
    \label{fig:triangle:1}
\end{figure}


Consider the \textbf{second phase}, i.e., the occupation of the base (see \Cref{fig:triangle:2}).
In each leg, we apply the shearing primitive on two subsegments.
Let $(v_1, \dots, v_\ell = v_{4k+2})$ denote the amoebots of a leg from bottom to top.
In the left leg, we shear $(v_1, \dots, v_{2k+1})$ to align it to the $x$-axis, and $(v_{2k+1}, \dots, v_{3k+1})$ to align it to the $z$-axis.
In the right leg, we shear $(v_1, \dots, v_{2k+1})$ to align it to the $x$-axis, and $(v_{2k+1}, \dots, v_{3k+1})$ to align it to the $y$-axis.
The lengths of the subsegments in both legs are chosen in a way that they even out their horizontal movement.
This implies that the remaining amoebots are moved straight downwards and that the connection points do not change their relative positions.
Since $\ell = 4 k + 1$, all nodes of the base are occupied after the phase.
Furthermore, the remaining amoebots form a rhombus with a missing corner on top of the base.
We continue with the next phase.

\begin{figure}[tb]
    \centering
    \includegraphics{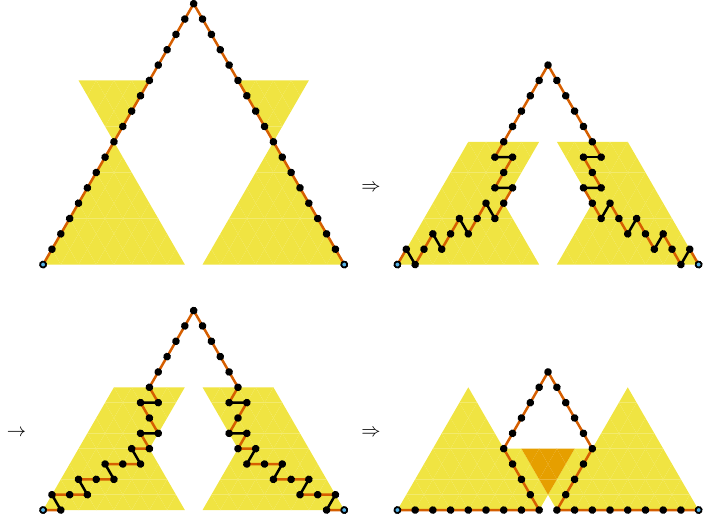}
    \caption{
        Second phase of the triangle primitive on a triangle of side length $\ell = 17$.
        Note that $k = 4$ holds.
        The blue amoebots indicate the connection points.
    }
    \label{fig:triangle:2}
\end{figure}


Consider the \textbf{third phase}, i.e., the occupation of the nodes adjacent to the base (see \Cref{fig:triangle:3}).
Since the base is occupied, we can disconnect the rhombus at the top to obtain two ``arms''.
We use the shearing primitive on both arms to move them to the base.
We start with the shorter arm.
Then, we shift the longer arm to the center by an expansion and contraction.
this allows us to shear it without leaving the triangle.
Now, the base and all adjacent nodes are occupied.
Note that this also holds for the excess amoebots from the first phase.
We continue with the next phase.

\begin{figure}[tbp]
    \centering
    \includegraphics{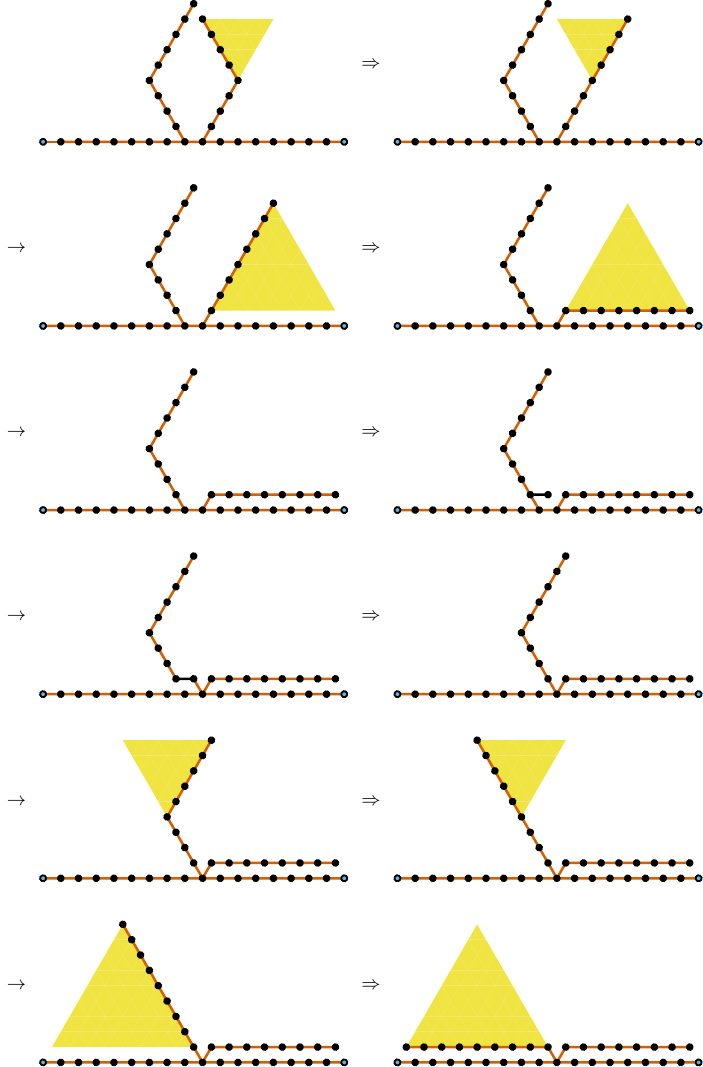}
    \caption{
        Third phase of the triangle primitive.
        The blue amoebots indicate the connection points.
        The first two and last two double arrows indicate $2$ rounds.
    }
    \label{fig:triangle:3}
\end{figure}


Consider the \textbf{fourth phase}, i.e., the occupation of one of the legs (see \Cref{fig:triangle:4}).
Here, we simply apply another shearing primitive on the amoebots adjacent to the base to move them to one of the legs.
By construction, we obtain the following lemma.

\begin{figure}[tb]
    \centering
    \includegraphics{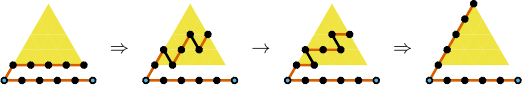}
    \caption{
        Fourth phase of the triangle primitive.
        The blue amoebots indicate the connection points.
    }
    \label{fig:triangle:4}
\end{figure}

\begin{lemma}
\label{lem:triangle}
    Our implementation of the triangle primitive requires $O(1)$ rounds.
\end{lemma}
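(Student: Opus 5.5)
The plan is to bound each of the four phases separately by a constant. Since phases run sequentially, the total is their sum. The main work is phase one, which as described reduces the triangle iteratively. Each reduction step uses the tunneling primitive on disjoint pairs of amoebots; a pair consisting of an expanded and a contracted amoebot is an alternating chain. By \Cref{lem:tunneling}, one reduction step therefore costs at most $3$ rounds. All pairs can be processed in parallel because they occupy disjoint nodes and move into disjoint target nodes adjacent to the leg. It remains to show that the number of reduction steps is $O(1)$, independent of $\ell$.

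For this I would analyze $\ell \bmod 8$. We stop as soon as $\ell = 1$ or $\ell = 4k+1$ with $k$ even, that is, $\ell \equiv 1 \pmod 8$. Each step decreases $\ell$ by exactly $1$, so starting from any $\ell \geq 1$ we reach a value $\equiv 1 \pmod 8$ (or reach $1$, which is itself $\equiv 1 \pmod 8$) after at most $7$ steps. Phase one thus takes at most $7 \cdot 3 = 21$ rounds plus constant bookkeeping. I also need to check that the excess amoebots parked at the bottom of the leg do not block later reductions. Over at most $7$ steps they number only $O(1)$, and they sit at the bottom in the positions that the later phases expect.

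Phases two through four are each a constant number of applications of the shearing primitive, running in parallel on disjoint segments. By \Cref{lem:shearing} each application costs $2$ rounds. Phase two is one parallel shearing of four subsegments. Phase three is a shearing of the shorter arm, one expansion and one contraction to shift the longer arm, and a shearing of the longer arm, for $2+2+2$ rounds (the figure indicates this count). Phase four is one more shearing. The bond releases needed to split the rhombus into two arms happen in the first step of a round and do not add rounds.

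The main obstacle is not the round count but legitimacy. At every round the substructure must stay connected, remain inside the triangle, and cause no collisions; in phase two, the connection points must also keep their relative position. For the latter I would verify that the horizontal displacements cancel. Shearing $2k+1$ amoebots to the $x$-axis shifts the top by a fixed amount, and shearing the next $k+1$ to the $z$-axis (respectively $y$-axis) shifts it back by an equal amount. This holds in both legs symmetrically, so the apex amoebots move purely vertically and the base corners stay fixed. Given that, the $O(1)$ bound follows directly from summing the constant phase costs.
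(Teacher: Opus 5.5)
Your proposal is correct and follows the reasoning the paper leaves implicit behind ``by construction''. Phase one performs at most $7$ unit reductions, since it stops as soon as $\ell \equiv 1 \pmod 8$, and each reduction costs $O(1)$ rounds; phases two to four are a constant number of shearing, expansion and contraction steps. One small nit: the leg amoebots start contracted, so each pair needs an expansion before it forms an alternating chain, but your ``constant bookkeeping'' already absorbs this.
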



Finally, consider a trapezoid where the shorter base and the legs are occupied by amoebots.
The amoebots at the acute corners serve as the connection points.
Let a node on the longer base be given.
We will call it the \emph{starting point}.
The goal of the trapezoid primitive is to occupy the longer base and a path between the bases starting from the starting point, which may consist of up to $2$ segments, without leaving the trapezoid and without changing the relative positions of the connection points at any time.
Note that the the shorter base and both legs require the same number of amoebots as the longer base and one leg.
Furthermore, the path between the legs requires the same number of amoebots as one leg.
Hence, we have the exact number of amoebots for this reconfiguration.


We realize the primitive as follows (see \Cref{fig:trapezoid}).
Our implementation consists of 4 phases.
In the \textbf{first phase}, we split the trapezoid into a triangle and a parallelogram such that the starting point is in the triangle.
Then, we apply the parallelogram primitive on the parallelogram.
This partially occupies the longer base.
It remains to deal with the resulting triangle.


In the \textbf{second phase}, we distinguish between two cases.
If the starting point is the corner of the triangle, we apply the triangle primitive on the triangle such that on completion, the occupied leg starts at the starting point.
This occupies the remaining part of the longer base.
Furthermore, the occupied leg forms a path between both bases starting from the starting point (i.e., the path consists of a single segment).
At this point, we terminate the primitive.

Otherwise, i.e., if the starting point is not the corner of the triangle, we split the triangle into two triangles and a parallelogram such that the starting point is a corner of the lower triangle and parallelogram.
Then, we apply the triangle primitive on the upper triangle.
We obtain a trapezoid (consisting of the lower triangle and the parallelogram) with an attached leg at one of its corners.


In the \textbf{third and fourth phase}, we repeat the first and second phase while keeping the attached leg attached.
However, this time, we can terminate since the starting point is one of the corners of the triangle.
Note that the resulting legs from the second and fourth phase form a path between the bases starting from the starting point (i.e., the path consists of $2$ segments).
By construction, we obtain the following lemma.

\begin{figure}[tb]
    \centering
    \includegraphics{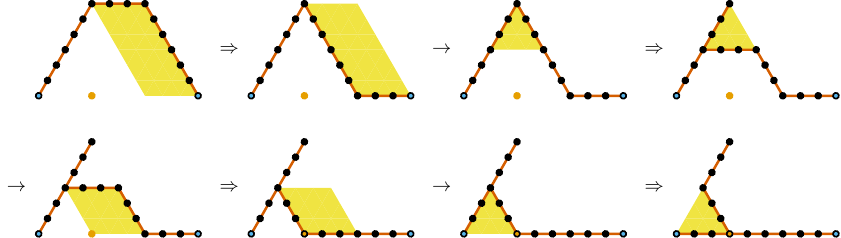}
    \caption{
        Trapezoid primitive.
        The orange node/amoebot indicates the starting point.
        The blue amoebots indicate the connection points.
        Each double arrow indicates a phase consisting of $2$ rounds.
    }
    \label{fig:trapezoid}
\end{figure}

\begin{lemma}
\label{lem:trapezoid}
    Our implementation of the trapezoid primitive requires $O(1)$ rounds.
\end{lemma}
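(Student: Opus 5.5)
The plan is to show that each of the four phases of the trapezoid primitive uses only a constant number of primitive invocations, each of constant cost by the earlier lemmas. The total is then a constant sum of constants.

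In the first phase we apply the parallelogram primitive once. By \Cref{lem:parallelogram} this costs $2$ rounds. The split into a triangle and a parallelogram is purely a scheduling decision and costs no rounds. We must check two things.
\begin{itemize}
\item The parallelogram has the shape the primitive requires: only the two sides at an obtuse corner are occupied, and its acute corners are the connection points. This holds because the occupied shorter base and the adjacent leg meet at an obtuse corner of the trapezoid.
\item The triangle keeps one trapezoid connection point and attaches to the parallelogram at the other. So its connections do not move relative to each other.
\end{itemize}

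In the second phase there are two cases. If the starting point is a corner of the triangle, we apply the triangle primitive once, which costs $O(1)$ rounds by \Cref{lem:triangle}, and we terminate. Otherwise we split the triangle into a lower triangle, a parallelogram and an upper triangle, and apply the triangle primitive once to the upper triangle, again $O(1)$ rounds. The third and fourth phases repeat the first and second phase once more: one parallelogram primitive ($2$ rounds) and one triangle primitive ($O(1)$ rounds). In each invocation we must check that the substructure satisfies the hypotheses of the invoked primitive. Since connection points keep their relative positions inside each primitive, the attached parts are never displaced, and the outer connection points of the trapezoid also stay fixed relative to each other. Summing at most two parallelogram invocations and two triangle invocations gives $O(1)$ rounds.

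I expect the main obstacle to be the fourth phase, where the already-produced leg remains attached at a corner while the smaller trapezoid is processed. The argument here is that this leg hangs off a connection point of the smaller trapezoid. The primitives only move amoebots inside their own region and preserve the relative positions of connection points, so the leg is merely translated along with that point. Because each region is contained in the original trapezoid, the translated leg stays inside the original trapezoid and cannot collide with anything. Connectivity is preserved because every primitive keeps its substructure connected at all times.
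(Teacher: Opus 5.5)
Your proposal is correct and matches the paper, whose proof is simply ``by construction'': the four phases use at most two parallelogram primitives ($2$ rounds each, \Cref{lem:parallelogram}) and two triangle primitives ($O(1)$ rounds each, \Cref{lem:triangle}), so the total is $O(1)$. One side remark is inaccurate, though it does not affect the time bound: the leg from the second phase hangs off an obtuse corner of the smaller trapezoid rather than a connection point, and in the fourth phase that corner is the apex of the lower triangle, which the triangle primitive does move (its second phase shifts the upper amoebots straight down), so the leg is dragged along with it and your ``fixed connection point'' argument does not cover that phase.
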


\begin{remark}
    Using the shearing primitive, we can reflect the path between the bases starting from the starting point.
\end{remark}



\section{Monotone Structures}
\label{sec:monotone}

Both our universal reconfiguration algorithm and our constant-time Spiral2Line algorithm make use of monotone structures as intermediate structures.
In \Cref{sec:monotone2line}, we show how to reconfigure a monotone structure into a line segment.
In \Cref{sec:starconvex2monotone}, we adapt the algorithm to reconfigure histograms to a line segment within a specific area.
This will allow us to reconfigure a star-convex structure into a monotone structure.

\subsection{Monotone2LineSegment Algorithm}
\label{sec:monotone2line}


In this section, we present our Monotone2LineSegment algorithm.
W.l.o.g., we assume that the amoebot structure is $y$-monotone.
By definition, each column is a $y$-segment.
We first remove all bonds between adjacent columns except for one arbitrary horizontal bond.
If there is no such bond, we shift the right column up by one position in a preprocessing phase (see \Cref{fig:m2l:preprocessing}).

\begin{figure}[tb]
    \centering
    \includegraphics{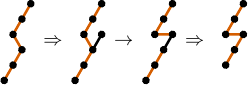}
    \caption{
        Preprocessing phase of the Monotone2LineSegment algorithm.
    }
    \label{fig:m2l:preprocessing}
    \label{fig:m2l:rule:0}
\end{figure}


Let $(v_1, \dots, v_m)$ denote the amoebots of the column from bottom to top.
Let $v_\ell$ ($v_r$) be the amoebot connected to the left (right) column if it exists and $v_\ell = v_1$ ($v_r = v_m$) otherwise.
Each column applies the following rules until $m = 1$ (see \Cref{fig:m2l:rule}).

\begin{figure}[tbp]    
    \begin{minipage}[t]{\linewidth}
        \centering
        \includegraphics{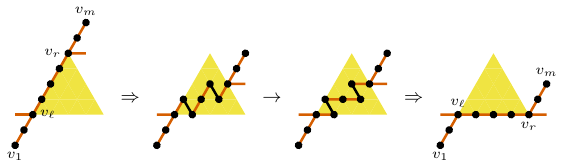}
        \subcaption{Rule 1 ($\ell < r$).}
        \label{fig:m2l:rule:1}
    \end{minipage}
    
    \begin{minipage}[t]{\linewidth}
        \centering
        \includegraphics{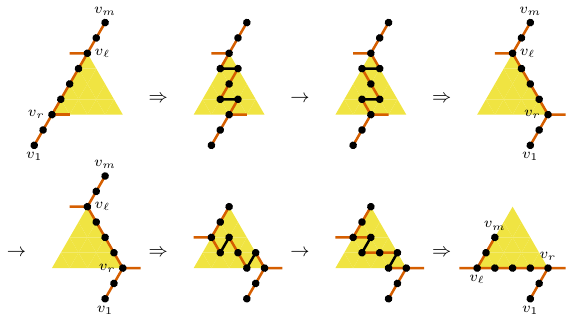}
        \subcaption{Rule 2 ($\ell > r$).}
        \label{fig:m2l:rule:2}
    \end{minipage}
    
    \begin{minipage}[t]{\linewidth}
        \centering
        \includegraphics{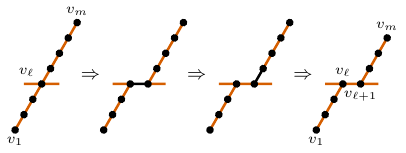}
        \subcaption{Rule 3a ($\ell = r$ and $1 < \ell < m$).}
        \label{fig:m2l:rule:3a}
    \end{minipage}
    
    \begin{minipage}[t]{\linewidth}
        \centering
        \includegraphics{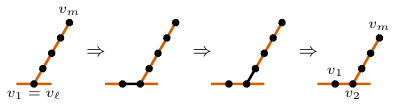}
        \subcaption{Rule 3b ($\ell = r$, $\ell = 1$ and $m$ odd).}
        \label{fig:m2l:rule:3b}
    \end{minipage}
    
    \begin{minipage}[t]{\linewidth}
        \centering
        \includegraphics{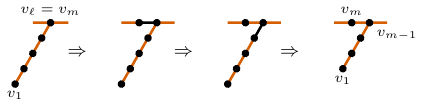}
        \subcaption{Rule 3c ($\ell = r$, $\ell = m$ and $m$ odd).}
        \label{fig:m2l:rule:3c}
    \end{minipage}
    
    \begin{minipage}[t]{\linewidth}
        \centering
        \includegraphics{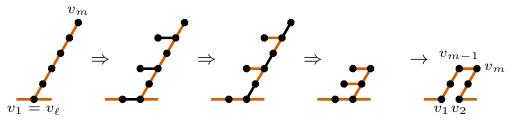}
        \subcaption{Rule 3d ($\ell = r$, $\ell = 1$ and $m$ even).}
        \label{fig:m2l:rule:3d}
    \end{minipage}
    
    \begin{minipage}[t]{\linewidth}
        \centering
        \includegraphics{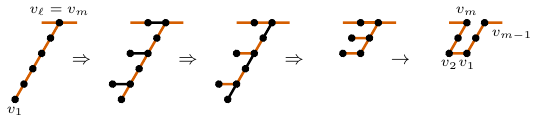}
        \subcaption{Rule 3e ($\ell = r$, $\ell = m$ and $m$ even).}
        \label{fig:m2l:rule:3e}
    \end{minipage}
    \caption{
        Rules used in the Monotone2LineSegment algorithm.
    }
    \label{fig:m2l:rule}
\end{figure}

\begin{enumerate}
    \item
    If $\ell < r$, we shear the subsegment $(v_\ell, \dots, v_r)$ to align it to the $x$-axis (see \Cref{fig:m2l:rule:1}).
    We obtain $r - \ell + 1$ new columns.
    The first column consists of $(v_1, \dots, v_\ell)$.
    For all $i \in \{ 2, \dots, r - \ell \}$, the $i$-th column consists of $v_{\ell - 1 + i}$.
    The last column consists of $(v_r, \dots, v_m)$.
    For each of these columns, $\ell = r$ holds.
    Note that if $\ell = 1 \land r = m$, all resulting columns will consist of a single amoebot, respectively.

    \item
    If $\ell > r$, we shear the subsegment $(v_r, \dots, v_\ell)$ twice to first align it to the $y$-axis and then to the $x$-axis (see \Cref{fig:m2l:rule:2}).
    We obtain $\ell - r + 1$ new columns.
    The first column consists of $(v_\ell, \dots, v_m)$.
    For all $i \in \{ 2, \dots, \ell - r \}$, the $i$-th column consists of $v_{\ell + 1 - i}$.
    The last column consists of $(v_1, \dots, v_r)$.
    For each of these columns, $\ell = r$ holds.
    Note that if $\ell = m \land r = 1$, all resulting columns will consist of a single amoebot, respectively.
    
    \item
    If $\ell = r$, we apply the following subrules.
    \begin{enumerate}
        \item 
        If $1 < \ell < m$, we split the column into two columns as follows (see \Cref{fig:m2l:rule:3a}).
        First, $v_\ell$ expands horizontally.
        Second, $v_\ell$ and $v_{\ell+1}$ perform a handover.
        Finally, $v_{\ell+1}$ contracts again.
        The left column consists of $(v_1, \dots, v_\ell)$.
        The right column consists of $(v_{\ell+1}, \dots, v_m)$.

        \item 
        If $\ell = 1$ and $m$ is odd, we split the column into two columns analogously to Rule 3a (see \Cref{fig:m2l:rule:3b}).
        The left column consists of $v_1$.
        The right column consists of $(v_2, \dots, v_m)$.

        \item 
        If $\ell = m$ and $m$ is odd, we split the column into two columns analogously to Rules 3a and 3b (see \Cref{fig:m2l:rule:3c}).
        The left column consists of $v_m$.
        The right column consists of $(v_2, \dots, v_{m-1})$.

        \item
        If $\ell = 1$ and $m$ is even, we split the subsegment $(v_\ell, \dots, v_m)$ into two columns with $m/2$ amoebots, respectively, as follows (see \Cref{fig:m2l:rule:3d}).
        For each $i \in \{ 1, 3, \dots, m-1\}$, $v_i$, we perform the following procedure in parallel.
        In the first round, each $v_i$ expands horizontally.
        In the second round, each $v_i$ performs a handover with $v_{i+1}$.
        In the third round, each $v_{i+1}$ contracts.
        We obtain two columns with $m/2$ amoebots.
        The left column consists of $(v_1, v_3, \dots, v_{m-1})$.
        The right column consists of $(v_2, v_4, \dots, v_m)$.
        
        For subsequent rule applications, we connect all amoebots within the same column with bonds and remove all bonds between those columns except for the topmost horizontal bond.
        Note that this does not require an additional round.

        \item
        If $\ell = m$ and $m$ is even, we split the subsegment $(v_\ell, \dots, v_m)$ into two columns with $m/2$ amoebots, respectively, analogously to Rules 3d (see \Cref{fig:m2l:rule:3e}).
        The left column consists of $(v_2, v_4, \dots, v_m)$.
        The right column consists of $(v_1, v_3, \dots, v_{m-1})$.
        
        For subsequent rule applications, we connect all amoebots within the same column with bonds and remove all bonds between those columns except for the bottommost horizontal bond.
        Note that this does not require an additional round.
    \end{enumerate}
\end{enumerate}

We obtain the following theorem.

\begin{theorem}
\label{th:monotone2line}
    The Monotone2LineSegment algorithm reconfigures a structure of {\sc Monotone} into a structure of {\sc LineSegment} in $16$ rounds.
\end{theorem}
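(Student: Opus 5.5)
We prove the theorem by tracking, for every column, which rule it applies in each phase. We then bound both the number of phases and the number of rounds in each phase. Assume w.l.o.g.\ a $y$-monotone structure.

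After preprocessing (at most $2$ rounds: one expansion and one contraction, shifting right columns up where no horizontal bond exists), every column is a $y$-segment connected to its neighbours by exactly one horizontal bond. The scheduler applies the rules in synchronized phases, and all columns act in parallel within a phase.

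\textbf{Phase 1.} Columns with $\ell<r$ apply Rule 1 ($2$ rounds, Lemma~\ref{lem:shearing}). Columns with $\ell>r$ apply Rule 2 ($4$ rounds, two shearings).

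\textbf{Phase 2.} By the rule descriptions, every resulting column satisfies $\ell=r$. Each column then applies Rule 3a, 3b, or 3c ($3$ rounds: expand, handover, contract). Afterwards every column has its connection at an end, i.e.\ $\ell=r\in\{1,m\}$. There is one subtlety: Rule 3b/3c may leave a column where the bond sits in the middle. I would check that the remaining column $(v_2,\dots,v_m)$ has both of its connections at its bottom. If so, in the next step it is even and satisfies the precondition of Rule 3d/3e.

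\textbf{Phase 3.} Each column applies Rule 3d or 3e ($3$ rounds). By the bond reassignment (topmost or bottommost horizontal bond), both halves again have $\ell=r$ at an end. The key point is that repeated halving would cost $\Theta(\log n)$ phases, which is incompatible with the constant $16$.

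So I would argue that after a Rule 3d/3e split, the new configuration of two adjacent columns, joined by one end bond and each alternating with its neighbour, admits a single Rule 1/2-style shearing that flattens everything. Since the halves are alternating chains, one can shear all of them to the $x$-axis at once, exactly as in Rule 1 with $\ell=1$, $r=m$. Once all columns have size $1$, the structure is an $x$-segment.

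The round count I would aim for is $2$ (preprocessing) $+\,4$ (Rule 1/2) $+\,3$ (Rule 3a--c) $+\,3$ (Rule 3d/3e) $+\,4$ (final shearing) $=16$.

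\textbf{Collision freedom.} Parallel operations in different columns must not collide. Adjacent columns share only one bond, and the shearings expand into the triangle between the old and new axes (Lemma~\ref{lem:shearing}). Columns to the right are pushed rightward rigidly through their single bond, so all motion is a horizontal translation of the parts to the right. I would formalize this by induction over the columns from left to right: each column's movement displaces everything to its right by a horizontal vector, and these vectors add up. Because each column stays $y$-monotone and moves only horizontally relative to its left neighbour, no two columns ever overlap. Connectivity is maintained because every rule keeps each column's two connection points inside one connected piece.

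\textbf{Main obstacle.} The hardest step is showing that after Rule 3d/3e the process terminates in constant, rather than logarithmic, time. I expect this to hinge on the alternating structure produced by Rule 3d/3e, which allows a final simultaneous shearing. I would verify this on small cases first and adjust the round budget accordingly.
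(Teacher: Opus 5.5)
Your overall strategy (follow the sequence of rules each column goes through and add up their round costs) is the paper's. However, the step you flag as the main obstacle is left open, and the claim you make there is false. After Rule 3d the halves do \emph{not} again satisfy $\ell=r$. Both connection points of the original column sat at its bottom $v_1$, whereas the bond kept between the two halves is the \emph{topmost} one. Hence the left half $(v_1,v_3,\dots,v_{m-1})$ has $\ell=1$, $r=m/2$, and the right half $(v_2,\dots,v_m)$ has $\ell=m/2$, $r=1$ (symmetrically for Rule 3e with the bottommost bond). These are exactly the extreme cases $\ell=1\wedge r=m$ of Rule 1 and $\ell=m\wedge r=1$ of Rule 2, which by definition break a column entirely into singletons. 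That, and not any ``alternating'' structure of the halves, is why no repeated halving occurs. Note also that the right half needs the double shear of Rule 2, not ``Rule 1 with $\ell=1$, $r=m$''. The paper makes this rigorous by splitting Rules 1 and 2 into these extreme cases (1b, 2b) and the rest (1a, 2a), and recording, per amoebot, which rule its column may need next. After 1a, 2a or 3a the next rule is one of 3b--3e; after 3b/3c it is 3d/3e; after 3d/3e it is 1b/2b; and after 1b/2b the amoebot has terminated (termination may also occur earlier). This graph is acyclic, so every amoebot takes part in at most four rule applications.

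Second, your globally synchronized four-phase schedule does not accommodate all these paths. A column starting with $1<\ell=r<m$ applies Rule 3a in your Phase 2. Its pieces $(v_1,\dots,v_\ell)$ and $(v_{\ell+1},\dots,v_m)$ have both connections at the top resp.\ bottom and may have odd length, so they next need Rule 3c resp.\ 3b, which your Phase 3 (3d/3e only) does not offer. The remedy is to let every column apply its next rule as soon as it is ready, and to bound the time per amoebot by the heaviest path in the graph. The path $2\mathrm{a}\to3\mathrm{b}\to3\mathrm{d}\to2\mathrm{b}$ costs $4+3+3+4=14$ rounds and any path starting with 3a at most $13$, giving $16$ with preprocessing. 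Finally, in your collision argument the part to the right of a column is not translated horizontally: preprocessing shifts columns vertically, and in Rule 1 the right neighbour drops by $r-\ell$ rows relative to $v_\ell$. What you actually need is that columns keep their left-to-right order on distinct $y$-lines while each shear stays within its triangle.
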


\begin{proof}

    First, consider the preprocessing phase and each rule separately.
    We show that they can be performed without causing any collisions.
    In the preprocessing phase, no amoebot leaves its column.
    So, no collision can occur with amoebots of other columns.
    In each rule, the column pushes its adjacent columns to the left and right while creating new columns.
    This prevents collisions with other columns.
    Finally, note that there cannot be any collisions within a column since each column is connected at all times.
    Initially, this holds by definition since the amoebot structure is monotone.
    Afterwards, it holds by construction of the preprocessing phase and all rules.
    

    For the sake of analysis, we split Rules 1 and 2:
    Rule 1a includes all cases $\ell < r \land (\ell \neq 1 \lor r \neq m)$,
    Rule 1b all cases $\ell < r \land (\ell = 1 \land r = m)$,
    Rule 2a all cases $\ell > r \land (\ell \neq m \lor r \neq 1)$, and
    Rule 2b all cases $\ell > r \land (\ell = m \land r = 1)$.
    Note that we do not change how the rules are executed.


    Now, consider any amoebot.
    We construct a state graph for the amoebot as follows (see \Cref{fig:rule_graph}).
    The state of the amoebot is defined by the rule we need to apply on its column.
    This includes a state where no rule can be applied.
    Clearly, that state is also the termination state.
    We add an edge from Rule $x$ to Rule $y$ if and only if the amoebot can end up in a column, on which we need to apply Rule $y$, after we applied Rule $x$ on its current column.
    \Cref{fig:rule_graph} shows the resulting graph.
    Initially, an amoebot can be in any of the states including the termination state.
    However, note that the graph is acyclic.
    Hence, after at most $4$ rules, the amoebot will reach the termination state.
    
\begin{figure}[tb]
    \centering
    \includegraphics[scale=.5]{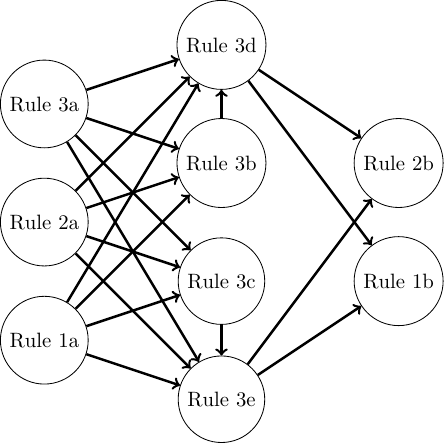}
    \caption{
        State graph.
        The termination state is not shown.
        Any other state has an edge to the termination state.
    }
    \label{fig:rule_graph}
\end{figure}

    
    Since the preprocessing step only takes $2$ rounds, each amoebot only participates in at most $4$ rules, each rule only requires $O(1)$ rounds, the algorithm terminates after $O(1)$ rounds.
    More precisely, the longest path in the state graph plus the preprocessing phase takes $16$ rounds.
    Since any amoebot in the termination state is the sole amoebot of its column and only has horizontal bonds, we obtain a line segment once all amoebots have terminated, i.e., the algorithm is correct.
\end{proof}



\subsection{StarConvex2Monotone Algorithm}
\label{sec:histogram2line}
\label{sec:starconvex2monotone}



In this section, we reconfigure star-convex structures into monotone structures.
For that, we first need to consider the reconfiguration of a histogram into a line segment within a specific area that we will define below.
Note that without this restriction, we can apply the Monotone2LineSegment algorithm to reconfigure the histogram into a line segment since histograms are monotone.

W.l.o.g., consider a histogram that consists of an $x$-segment and attached $y$-segments on the top.
Our goal is to reconfigure it into line segment such that no amoebot moves beyond the $x$- and $y$-axis through the leftmost amoebot on the $x$-segment.
During the reconfiguration of star-convex structures into monotone structures, we will use the leftmost amoebot as a connection point.

We adapt the Monotone2LineSegment algorithm as follows.
First, we choose the bonds along the $x$-segment as the horizontal bonds between the columns.
Second, we combine Rules 1, 2, and 3d to a new Rule:
\begin{enumerate}
    \setcounter{enumi}{3}
    \item 
    If $\ell = 1$ and $m$ is even, we split the column into $m$ columns with a single amoebot, respectively, as follows (see \Cref{fig:m2l:rule:4}).
    The rule consists of $3$ phases.
    In the first phase, we apply Rule 3d.
    In the second phase, we apply the first shear primitive of Rule 2 on the right column.
    In the third phase, we apply Rule 1 on the left column and the second shear of Rule 2 on the right column in parallel.
\end{enumerate}
We obtain the following lemma.

\begin{figure}[tb]
    \centering
    \includegraphics{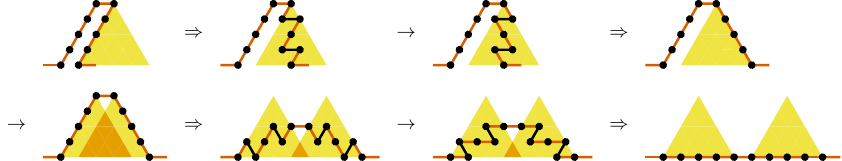}
    \caption{
        Rule 4.
        We omit the first phase, i.e., the application of Rule 3d (see \Cref{fig:m2l:rule:3d}), and only show the second and third phase.
    }
    \label{fig:m2l:rule:4}
\end{figure}

\begin{lemma}
\label{lem:histogram2line}
    The Histogram2LineSegment algorithm, i.e., the adapted Monotone2Line algorithm, reconfigures a structure of {\sc Histogram} into a structure of {\sc LineSegment} in $O(1)$ rounds without moving any amoebot beyond the $x$- and $y$-axis through the leftmost amoebot.
\end{lemma}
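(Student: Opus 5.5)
We reduce the lemma to \Cref{th:monotone2line} and add two things: a characterisation of which rules are ever applied to a histogram under the adapted bond choice, and a check that every applied rule keeps all amoebots in the quadrant to the right of the $y$-axis and above the $x$-axis through the leftmost amoebot $u$ of the $x$-segment.

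\textbf{Which rules occur.} Since the bonds between columns are chosen along the $x$-segment, every column of the histogram is a $y$-segment whose bottom amoebot $v_1$ lies on the $x$-segment. Its left and right bonds both attach at $v_1$, so initially $\ell = r = 1$ in every column. The first column is the exception when it has no left neighbour; there $v_\ell = v_1$ by definition, so again $\ell = 1$. The same holds for the last column with $v_r = v_m$, since $m$ may be larger than $1$. I have to treat that boundary case with care. If $v_r = v_m \neq v_1$, then Rule 1 would apply, and Rule 1 shears $(v_1,\dots,v_m)$ onto the $x$-axis to the right. I expect this stays inside the quadrant, because shearing upward or rightward only moves amoebots in the $+x$ direction from $u$. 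The natural fix is to declare $v_r := v_1$ for the last column too, so that only Rules 3b and 4 ever apply.

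\textbf{Rule dynamics and time bound.} With $\ell = r = 1$, a column of odd height $m$ uses Rule 3b. This splits off $v_1$ and leaves a column $(v_2,\dots,v_m)$ of even height, now attached at its bottom. That column then uses Rule 4, which turns it into single amoebots. A column of even height uses Rule 4 directly. So each amoebot passes through at most two rules, each taking $O(1)$ rounds by \Cref{lem:shearing} and the three-round split of Rule 3d. This gives $O(1)$ rounds overall. Correctness follows as in the proof of \Cref{th:monotone2line}: every terminated amoebot is alone in its column and has only horizontal bonds, so the result is a line segment.

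\textbf{Geometric confinement (main obstacle).} I must show that no amoebot ever moves below the $x$-axis through $u$ or to the left of the $y$-axis through $u$.
\begin{itemize}
\item In Rule 3b and the split of Rule 3d/4, amoebots only expand horizontally to the right and perform handovers. Each column pushes the columns to its right further right, and $u$ stays fixed as the left reference. Vertical positions do not decrease: the left part stays at the base and the right part slides down by at most to the base row.
\item Rule 4 avoids Rule 2's downward excursion. The first shear of Rule 2 applied to the right column maps a $y$-segment hanging above the base onto the $z$-axis, then onto the $x$-axis. I need to check that its pivot is the bottom amoebot, so the segment rotates from above the base through the upper-right region rather than below. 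In parallel, Rule 1 on the left column shears it rightward onto the base row.
\item Both shears happen inside triangles spanned by the initial and final positions of a segment, as noted for the shearing primitive. Both triangles lie above the base row and to the right of the column's base amoebot. That amoebot is to the right of or at $u$.
\end{itemize}
The delicate step is the joint movement. When a column shears or expands, everything bonded to its right is translated, and I must check that no translation has a negative $x$- or downward component relative to $u$. My plan is to fix $u$ as the reference point, and then verify in each of the three phases of Rule 4, and in Rule 3b, that the connection point of each column to its left neighbour stays fixed while all motion relative to it points right or up. If the second phase (the first shear of Rule 2) has its pivot at the top, it would move amoebots below the base. In that case I would instead reflect the right column's shear using the remark after \Cref{lem:trapezoid}, so that the pivot is at the base.
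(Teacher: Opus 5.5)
\textbf{Verdict: there is a genuine gap in the confinement argument for Rule 4.}

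Your route matches the paper's. You reduce correctness and running time to Theorem~\ref{th:monotone2line} and note that all inter-column bonds start at the bottom, so only Rules 3b and 4 occur and both keep the bonds at the bottom. You then check confinement for these two rules. The rule count and running time are fine. Your observation about the last column is correct; the paper's proof passes over it. It is harmless as it stands: Rule 1b there shears the column about $v_1$, which is bonded to the left, onto the base row to the right, so you do not need to redefine $v_r$.

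The gap is in Rule 4, which is the only genuinely new content of the lemma. After Rule 3d the two halves are joined only by the topmost horizontal bond. In the second phase the left half $(v_1,v_3,\dots,v_{m-1})$ is idle, so relative to $v_1$ the right half necessarily pivots about its \emph{top} amoebot $v_m$. This is dictated by the bonds, not a design choice. The remark after Lemma~\ref{lem:trapezoid} concerns the path produced by the trapezoid primitive and cannot change it. Keeping $v_2$ in place relative to $v_1$ would force $v_m$, and with it the top of the idle left half, to move left, which is impossible.

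Your claim that a top pivot pushes amoebots below the base row is also false. Shearing a $y$-segment onto the $z$-axis about its upper endpoint sweeps a triangle with apex $v_m$ whose lower side lies on the base row. So $v_2$, and everything bonded to its right, simply slides $m/2-1$ steps to the right. The downward excursion that Rule 4 avoids has a different source. In the unadapted algorithm, Rule 1 on the left half runs concurrently with the \emph{first} shear of Rule 2. Then $v_m$ descends to the base row while $v_2$ keeps its offset below $v_m$, so $v_2$ ends below the base row.

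Your verification criterion also fails in the third phase. You require that the connection point to the left neighbour stays fixed and all relative motion points right or up, but there both the top of the left half and $v_m$ move down. What is missing is a cancellation argument:
\begin{itemize}
\item Rule 1 lowers $v_m$.
\item The second shear of Rule 2 raises $v_2$ relative to $v_m$ by exactly the same amount, round by round. This holds because the two shears are reflections of each other in the horizontal and act on halves of equal length $m/2$.
\item Hence $v_2$ stays on the base row, and everything to its right is translated purely horizontally.
\item Both halves stay between the base row and their initial tops, to the right of $v_1$.
\end{itemize}
This is precisely why Rule 4 pairs Rule 1 with the \emph{second} shear of Rule 2. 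Without this argument, and with your fallback of altering the shear, the confinement part of the lemma remains unproved.
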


\begin{proof}
    The correctness and runtime follows mostly from \Cref{th:monotone2line}.
    Note that Rule 4 cannot cause any collisions since it is a combination of Rules 1, 2, and 3d.
    It only remains to prove that we do not move any amoebot beyond the $x$- and $y$-axis through the leftmost amoebot.
    For that, we argue that each rule only pushes amoebots to the right (and with that away from the $y$-axis through the leftmost amoebot) without moving any amoebots below the $x$-axis through the leftmost amoebot.
    Note that we only need to consider Rules 3b and 4 since initially, all horizontal bonds are the bottommost ones of each column, and both rules maintain this property.
    By construction, both rules only push amoebots to the right (and with that away from the $y$-axis through the leftmost amoebot) without moving any amoebots below the $x$-axis through the leftmost amoebot.
\end{proof}



Now, we can describe our StarConvex2Monotone algorithm that reconfigures a star-convex structure, w.l.o.g., into a $y$-monotone structure.
By definition, we can split a star-convex structure into the center amoebot and at most $6$ disjoint histograms (see \Cref{fig:star_convex}).
We apply the Histogram2LineSegment algorithm on each of these histograms with the amoebots adjacent to the center amoebot as connection points such that we obtain $2$ $y$-segments and $4$ $x$-segments.
We obtain the following theorem.

\begin{figure}[tb]
    \centering
    \includegraphics{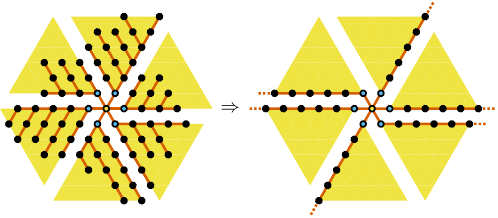}
    \caption{
        Histograms in a star-convex structures.
        The yellow amoebot indicates the center amoebot.
        The blue amoebots indicate the connection points used by the Histogram2LineSegment algorithm.
        The yellow triangles indicate the area for each instance from which no amoebot is moved out, respectively.
        The double arrow indicates multiple rounds.
    }
    \label{fig:star_convex}
\end{figure}

\begin{theorem}
    The StarConvex2Monotone algorithm reconfigures a structure of {\sc StarConvex} into a structure of {\sc Monotone} in $O(1)$ rounds.
\end{theorem}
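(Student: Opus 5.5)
We first fix the decomposition. Let $c$ be the center amoebot of the star-convex structure $S$. The six rays from $c$ along the grid axes split the plane into six closed $60^\circ$ sectors (wedges). Star-convexity with respect to $c$ means that for every amoebot $v$, all shortest paths from $c$ to $v$ lie in $S$. Fix a sector bounded by two consecutive rays. Every $v$ in that sector is reached by a shortest path that first goes along one bounding ray and then turns by $60^\circ$ parallel to the other ray.

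So the amoebots of the sector, excluding $c$, form a segment on the first ray starting next to $c$, with segments parallel to the second ray attached to one side. This is a histogram whose base starts at the amoebot adjacent to $c$. We assign each ray to exactly one of its two incident sectors, e.g.\ always the clockwise one. The six histograms $H_1,\dots,H_6$ are then pairwise disjoint and cover $S\setminus\{c\}$.

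Each $H_i$ occupies a closed $60^\circ$ wedge $W_i$ with apex next to $c$, and the wedges are disjoint apart from their boundaries near $c$. Next we release all bonds between different histograms and keep only the bond from $c$ to the leftmost amoebot of each $H_i$. By \Cref{lem:histogram2line}, run in the appropriately rotated or reflected frame, $H_i$ becomes a line segment along its base ray in $O(1)$ rounds. No amoebot leaves the region bounded by the two axes through that connection point, i.e.\ the wedge $W_i$.

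All six instances run in parallel. Each instance is connected to $c$ by a bond, and $c$ does not move, so connectivity is maintained. Since the regions $W_i$ are disjoint, no two instances collide. After $O(1)$ rounds the structure consists of $c$ with at most six line segments emanating along the six rays.

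Two opposite rays together with $c$ form one line. It remains to see that the result is monotone:
\begin{itemize}
\item We rotate the picture so that some pair of opposite rays is the $y$-axis through $c$. These rays together with $c$ form a single $y$-segment, giving the two $y$-segments plus $c$.
\item The other four segments lie along the $x$- and $z$-rays. Each of them meets any column in at most one amoebot.
\item The columns through $c$ coincide only with the $y$-line, and any other column intersects at most one of the four half-lines on each side.
\end{itemize}

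The last point is the main obstacle. The $x$-ray and the $z$-ray on the same side of the $y$-axis hit a given column at different heights, so that column contains two amoebots that are not adjacent. Two fixes are possible. First, I would choose the target orientation so that $c$ sits in the $y$-segment and the remaining four rays pair up as two full lines, then apply one extra constant-round fix. Here I would use the shearing primitive (\Cref{lem:shearing}) on the two segments on one side, e.g.\ shear the $z$-ray segment onto the $x$-ray, but the occupied ray blocks this. Therefore I would instead take the second fix: before running the histogram algorithms, orient each histogram's base and reflection choice (the Remark on reflecting) so that the two histograms in each half-plane end as segments on the same line. For example, in the right half-plane both are flattened toward the $x$-ray direction: one ends on the $x$-ray, the other is flattened and then sheared adjacent along the $y$-axis. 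Making the collision-free argument for this choice of orientations is where I expect most of the care to be needed. Everything else is $O(1)$ composition of earlier lemmas.
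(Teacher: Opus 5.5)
\textbf{There is a genuine gap: your construction does not end in a monotone structure, and the step that would make it monotone is missing.}

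Your decomposition and the parallel use of \Cref{lem:histogram2line} are sound. In fact the six wedges $W_i$, each with apex at a different neighbour of $c$, are pairwise disjoint outright; they tile the plane minus $c$ like a pinwheel. The problem is the last step, which is the actual content of the theorem. Flattening every histogram onto its own ray produces a six-armed star, and that is in general not monotone in any direction. For example, the column through the second amoebot of an $x$-arm also contains the second amoebot of the $z$-arm on the same side of the $y$-line through $c$, while the node between them is empty. So your third bullet is false, as you notice yourself, and the proof stops there. Neither fix is carried out. The first is blocked. The second is not specified precisely: it is unclear what ``on the same line'' and ``sheared adjacent along the $y$-axis'' mean, and any post-hoc shear adds a move whose collision-freeness you leave open.

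\textbf{The missing idea, which needs no extra shearing.} In the two sectors whose assigned ray is a $z$-ray, do not use the $z$-ray as the base. Instead, take as base the line parallel to the $x$-axis through the $z$-neighbour $p$ of $c$, so that the $z$-ray becomes the first column of that histogram.

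\begin{itemize}
\item \emph{It is still a histogram.} Star-convexity puts \emph{all} shortest paths from $c$ into the structure, not just the single ``ray, then turn'' path you used. In particular, it contains the paths using only the two bounding directions of the sector, in any order. Hence this set is also a histogram whose base starts at $p$.
\item \emph{Collision-freeness is unchanged.} \Cref{lem:histogram2line} confines an instance to the wedge at its connection point spanned by the base and column directions. This wedge is symmetric in the two roles, so it is the same $W_i$ as in your version. Your disjointness and connectivity arguments therefore carry over verbatim.
\item \emph{The outcome is monotone.} You get one $y$-segment through $c$, formed by the two $y$-histograms. On each side of it there are two $x$-segments on adjacent rows: one starts at the $x$-neighbour of $c$, the other at the $z$-neighbour of $c$. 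Every column other than the one through $c$ meets only one side. There it meets two adjacent rows in at most two nodes that differ by a single $y$-step. So the result is $y$-monotone.
\end{itemize}

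This is exactly the paper's argument: ``$2$ $y$-segments and $4$ $x$-segments'' forming two pairs of adjacent $x$-segments attached to the central $y$-segment.
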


\begin{proof}
    Note that by \Cref{lem:histogram2line}, the $6$ instances of the Histogram2LineSegment algorithm cannot collide with each other since they stay within their part of the grid (see \Cref{fig:star_convex}).
    The runtime directly follows from \Cref{lem:histogram2line}.

    It remains to prove the correctness, i.e., that we obtain a $y$-monotone structure.
    The $2$ resulting $y$-segments form a single $y$-segment with the center amoebot.
    This segment is trivially $y$-monotone.
    The $4$ resulting $x$-segments form two pairs of adjacent $x$-segments attached to $y$-segment.
    Since the $x$-segments are adjacent, they must be $y$-monotone.
    Since all three parts are $y$-monotone, the structure is $y$-monotone.
\end{proof}


\section{Universal Reconfiguration}
\label{sec:universal}

In this section, we describe our universal reconfiguration algorithm that reconfigures an arbitrary structure into a line segment.
It consists of $3$ subroutines.
In the first subroutine, we apply anthe Arbitrary2Bounded algorithm (called Arbitrary2Bounded) to reconfigure the structure into a bounded structure.
In the second subroutine, we apply an algorithm (calledthe Bounded2Monotone) algorithm to reconfigure the bounded structure into a monotone structure.
Finally, in the third subroutine, we apply the Monotone2LineSegment algorithm to reconfigure the monotone structure into a line segment.
The latter subroutine was presentedshown in \Cref{sec:monotone2line}.
In the following, we present the other two subroutines.


First, consider the \textbf{reconfiguration into a bounded structure}.
W.l.o.g., we will bound the number of rows to $\sqrt n$.
The idea is to iteratively merge rows with less than $\sqrt n$ amoebots with neighboring rows.
Each iteration proceeds as follows.
We first pair a maximum number of adjacent rows such that each pair has at least one row with less than $\sqrt n$ amoebots.


Then, we merge each pair of rows as follows.
The merge consists of $3$ phases.
In the first phase, each amoebot expands into the other row parallel to the $y$-axis if the node is unoccupied (see \Cref{fig:universal:expand}).
Observe that the amoebots that are not able to expand form pairs (see the blue amoebots in \Cref{fig:universal:expand}).


In the second phase, we iteratively move one amoebot of each of these pairs to the next unoccupied position in the row.
W.l.o.g., consider the rightmost pair of amoebots and all expanded amoebots to the next unoccupied position to the right (see \Cref{fig:universal:tunneling}).
The idea is to construct an alternating chain from the pair of amoebots to the next unoccupied position.
For that, we contract every second amoebot to its bottom node, ensuring the rightmost amoebot remains expanded.
This construction guarantees that we can maintain connectivity within the amoebot structure since the participating amoebots can keep at least one bond to each of the initially adjacent nodes (see the gray nodes in \Cref{fig:universal:tunneling}).
Now, we can apply the tunneling primitive to move all amoebots to the right (see \Cref{lem:tunneling}).
Finally, we expand all contracted amoebots again.
We continue with the next iteration if there is another pair of amoebots.
Otherwise, we continue to the next phase.


In the third phase, we contract all expanded amoebots again (see \Cref{fig:universal:contract}).
This concludes the merge.
We obtain the following lemma.


\begin{figure}[tb]
    \centering
    \includegraphics{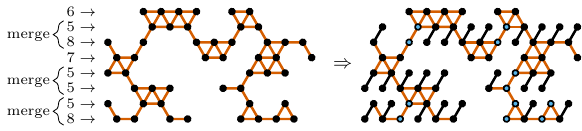}
    \caption{
        First phase of the Arbitrary2Bounded algorithm.
        The amoebot structure has $n = 49$.
        Hence, each pair of rows that we want to merge has at least one row with less than $\sqrt n = 7$ amoebots.
        The blue amoebots indicate the pairs of amoebots that are unable to expand.
    }
    \label{fig:universal:expand}
\end{figure}

\begin{figure}[tb]
    \begin{minipage}[t]{\linewidth}
        \centering
        \includegraphics{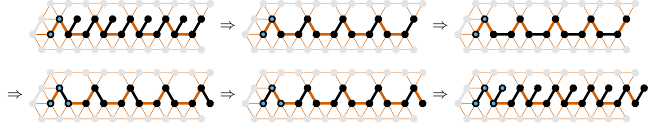}
        \subcaption{Even number of amoebots.}
        \label{fig:universal:tunneling:even}
    \end{minipage}
    
    \begin{minipage}[t]{\linewidth}
        \centering
        \includegraphics{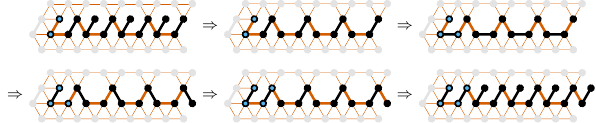}
        \subcaption{Odd number of amoebots.}
        \label{fig:universal:tunneling:odd}
    \end{minipage}
    \caption{
        Single iteration of the second phase of the Arbitrary2Bounded algorithm.
        The blue amoebots indicate a pair of amoebots that were unable to expand in the first phase.
        The gray nodes are initially adjacent to the participating amoebots.
    }
    \label{fig:universal:tunneling}
\end{figure}

\begin{figure}[tb]
    \centering
    \includegraphics{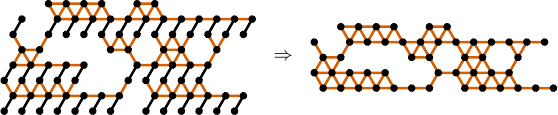}
    \caption{
        Third phase of the Arbitrary2Bounded algorithm.
        All amoebots contract again.
    }
    \label{fig:universal:contract}
\end{figure}


\begin{lemma}
\label{lem:arbitrary2bounded}
    The Arbitrary2Bounded algorithm reconfigures a structure of {\sc Arbitrary} into a structure of {\sc Bounded($\sqrt n$)} in $O(\sqrt n \log n)$ rounds.
\end{lemma}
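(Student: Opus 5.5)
We argue three things in turn: each merge is correct, each iteration takes $O(\sqrt n)$ rounds, and there are $O(\log n)$ iterations.

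\textbf{Correctness of a merge.} Consider a pair of adjacent rows $R_1,R_2$ in which, w.l.o.g., $R_1$ has fewer than $\sqrt n$ amoebots. After the first phase every amoebot of the pair occupies its own node of the pair's two-row band, either contracted or expanded into the other row. The only exceptions are pairs of vertically stacked amoebots, and each such pair has one amoebot in $R_1$. Hence the number of blocked pairs is less than $\sqrt n$.

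In the second phase, one amoebot of each blocked pair is tunneled to the next free position of the merged row. We need to check three things.
\begin{itemize}
\item The chain built by contracting every second expanded amoebot is alternating, so \Cref{lem:tunneling} gives $O(1)$ rounds per iteration.
\item Connectivity to the rest of the structure (the rows above and below) is preserved. Every node initially adjacent to a participating amoebot stays adjacent to some amoebot throughout, which is the gray-node observation in \Cref{fig:universal:tunneling}.
\item No collisions occur. All movement happens inside the pair's band and only into the unoccupied target node.
\end{itemize}
After the third phase the two rows have become one row, and the amoebot count is preserved.

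Pairs are vertex-disjoint, but adjacent pairs share boundaries. Since movement stays within each pair's own two rows, distinct pairs can be processed in parallel. One iteration therefore takes $O(1)+O(\sqrt n)\cdot O(1)+O(1)=O(\sqrt n)$ rounds. A possible subtlety is whether the "next unoccupied position" exists in the row. If none exists inside the current extent, the row extends at its end, which is also free.

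\textbf{Number of iterations.} Call a row small if it has fewer than $\sqrt n$ amoebots. At most $\sqrt n$ rows are large. We choose the pairing greedily as a maximum matching on the path of rows, with edges between adjacent rows at least one of which is small. The main obstacle is showing that the number of rows, or at least the number of small rows, shrinks by a constant factor per iteration.

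I would argue this as follows. Every small row that is not adjacent to another row is either matched or has all its neighbors matched, and each matched pair can block at most two such small rows. So a constant fraction of the small rows is involved in merges, and each merge reduces the row count by one. Counting rows $r$ with $s$ small rows, one iteration removes at least $\Omega(s)$ rows, so $r$ drops to at most $r-\Omega(s)$.

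While $s$ is at least a constant fraction of $r$, the row count decreases geometrically. Once $s$ is small relative to $r$, the total $r\le \sqrt n + s$ means we are already near $O(\sqrt n)$ rows. Since the initial row count is at most $n$, $O(\log n)$ iterations reduce it to $O(\sqrt n)$. A merged row may become large, but this only helps, since large rows number at most $\sqrt n$.

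To get exactly $\le\sqrt n$ rows rather than $O(\sqrt n)$, I would either rescale the threshold or accept a constant factor, noting that {\sc Bounded}$(O(\sqrt n))$ suffices for the later steps. Multiplying the $O(\log n)$ iterations by $O(\sqrt n)$ rounds each gives the claimed $O(\sqrt n\log n)$ bound.
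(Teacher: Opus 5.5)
Your merge analysis and the $O(\sqrt n)$ bound per iteration are fine and match the paper. The gap is in the iteration count: you track the total number of rows $r$, and that potential gives only $r<2\sqrt n$ after $O(\log n)$ iterations. This falls short of the lemma in two ways. First, it yields {\sc Bounded}$(2\sqrt n)$ rather than {\sc Bounded}$(\sqrt n)$. Second, it does not bound the running time of the algorithm as defined. The algorithm keeps iterating as long as some row has fewer than $\sqrt n$ amoebots, and you never show that this condition disappears within $O(\log n)$ iterations. Your fallbacks do not close the gap. Rescaling the threshold changes the algorithm the lemma is about. Accepting a constant factor proves a weaker statement and needs a stopping rule the algorithm does not have.

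The missing observation is that the right potential is the number of \emph{small} rows $s_t$, because every merge destroys at least one small row. A small row merged with a large row yields a large row, and two small rows merge into at most one small row. Unmerged rows are unchanged, so no new small rows appear. Hence $s_{t+1}\le s_t-p_t$, where $p_t$ is the number of merged pairs. Your own maximality count gives $p_t\ge s_t/4$; it applies because a structure with a small row has at least two rows. So $s_{t+1}\le \frac34 s_t$, and since $s_t$ is an integer, it reaches $0$ after $O(\log n)$ iterations. At that point every row has at least $\sqrt n$ amoebots, so there are at most $\sqrt n$ rows, exactly as claimed. This is the paper's argument, with the sharper fact that a suitably chosen maximum pairing covers all small rows except at most one, so $s_t$ roughly halves each iteration.
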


\begin{proof}
    We first show that the algorithm does not cause any collisions.
    In the first phase, amoebots only expand into unoccupied nodes.
    In the second phase, amoebots only perform isolated handovers.
    In the third phase, we contract whole rows.
    Hence, neither phase can cause a collision.
    Since the algorithm does not cause any collisions, the correctness follows from the fact that there can be at most $\sqrt n$ rows with at least $\sqrt n$ amoebots.


    Since we can pair all rows with less than $\sqrt n$ amoebots (except for at most $1$), the number of rows with less than $\sqrt n$ amoebots gets reduced by (approximately) $\frac{1}{2}$.
    Therefore, after $O(\log n)$ iterations, all rows have at least $\sqrt n$ amoebots such that we terminate.
    
    
    Each iteration consists of three phases.
    The first and third phase only consist of a single round, respectively.
    A single iteration of the second phase requires $5$ rounds: $1$ round to construct the alternating chain, $3$ rounds to apply the tunneling primitive (see \Cref{lem:tunneling}), and $1$ round to expand the amoebots again.
    
    
    It remains to bound the number of necessary iterations in the second phase.
    Since at least one row of each pair of rows has less than $\sqrt n$ amoebots, there can only be at most $\sqrt n - 1$ many pairs of amoebots that cannot expand in the first phase.
    Therefore, we only need at most $\sqrt n - 1$ iterations in the second phase.
    %
    %
    Overall, we require $O(\sqrt n \log n)$ rounds.
\end{proof}

\begin{remark}
    The second phase can be parallelized within each row.
    However, if all pairs of amoebots are next to each other, we can only tunnel the two outermost pairs at the same time.
    In the worst case, the parallelization can only speed up the second phase of the algorithm by a factor of $2$.
    Therefore, the runtime does not improve asymptotically.
\end{remark}




Second, consider the \textbf{reconfiguration into a monotone structure}.
W.l.o.g., we assume that the number of rows in the amoebot structure is bounded by $k$ and our goal is to reconfigure it into a $y$-monotone structure.
We make use of the combing algorithm by Aloupis \emph{et al.} \cite{DBLP:journals/comgeo/AloupisCDDFLORAW09}.
The algorithm utilizes a sweeping approach.
The sweep line\footnote{In \cite{DBLP:journals/comgeo/AloupisCDDFLORAW09}, the sweep line is called a \emph{wall}.} moves from the topmost row to the bottommost row.
We remove all bonds not parallel to the $y$-axis above of the sweep line such that we obtain $y$-segments that are only connected to the sweep line.
Whenever the sweep line moves from one row to the next row, we move each of these line segments downwards if the node beneath the line segment is unoccupied.
For that, we simply expand the bottommost amoebot of the line segment to that node and contract it again afterwards (see \Cref{fig:universal:sweep}).

\begin{figure}[tb]
    \centering
    \includegraphics{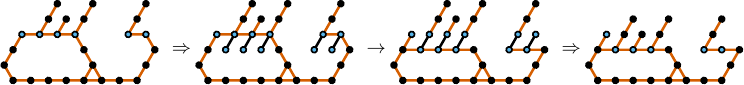}
    \caption{
        Combing algorithm by Aloupis \emph{et al.} \cite{DBLP:journals/comgeo/AloupisCDDFLORAW09}.
        The blue amoebots indicate the sweep line. 
    }
    \label{fig:universal:sweep}
\end{figure}

\begin{lemma}
\label{lem:bounded2monotone}
    The Bounded2Histogram algorithm reconfigures a structure of {\sc Bounded($k$)} into a structure of {\sc Monotone} in $O(k)$ rounds.
\end{lemma}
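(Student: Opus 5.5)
We analyze the combing sweep directly, arguing correctness, absence of collisions and connectivity, and the $O(k)$ bound in turn. Since the structure lies in at most $k$ rows, the sweep line starts at the topmost row and visits each of the at most $k-1$ lower rows once. Each step of the sweep consists of a constant number of rounds: one round in which the bottommost amoebot of every $y$-segment above the sweep line expands downward into an unoccupied node, and one round in which it contracts again. During the contraction the segment is pulled down by one row. The sweep line itself then consists of the union of the old sweep row and the segment bottoms. This gives $O(k)$ rounds overall.

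The invariant I would maintain is the following. After the sweep line has reached row $i$, every column restricted to rows $i$ and above is a single contiguous $y$-segment whose lowest node lies in row $i$. Equivalently, the part of the structure on or above the sweep line consists of the sweep row with vertical $y$-segments hanging upward from it. This invariant holds trivially at the start, since the topmost row has nothing above it.

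For the inductive step, consider moving from row $i$ to row $i-1$ (downward). In a column where the node below the segment in row $i-1$ is free, the segment slides down one step. In a column where that node is occupied, the segment already rests on an amoebot of row $i-1$ and, together with it, forms a contiguous $y$-segment ending in row $i-1$. Either way the invariant is restored for row $i-1$. After the final row, every column is a single contiguous $y$-segment, which means the structure is $y$-monotone.

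Collision-freeness is immediate: expansions only go into unoccupied nodes, and segments move only within their own column, so distinct segments never meet. The main obstacle I expect is connectivity. Bonds to the sweep row are released and segments move, so I must show that the connectivity graph stays connected throughout. The approach is to keep all bonds on and below the current row intact, and to note that each hanging segment stays bonded to the structure via its expanding or contracting bottom amoebot. When that amoebot contracts downward, its new node has a bond to a neighbour in row $i-1$ or is adjacent to the old sweep row; I have to exhibit such a bond explicitly.

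A subtle case arises when a segment moves down but row $i-1$ has no amoebot adjacent to it. Here I would argue, as Aloupis \emph{et al.} do, that the sweep row (row $i$) remains bonded horizontally within itself and to row $i-1$ somewhere. The row $i$ amoebots that moved down stay adjacent to their row $i$ neighbours that did not move, through the diagonal grid edges of the triangular lattice. Hence the upper part stays attached to the lower part. I would verify this local adjacency carefully, since it is exactly where the triangular-grid geometry differs from the square grid of the original combing algorithm.
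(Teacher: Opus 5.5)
Your route differs from the paper's. The paper does not analyse the sweep at all: it argues that the combing algorithm of Aloupis \emph{et al.} can be simulated with the available primitives, and it takes correctness and the $O(k)$ bound from that paper. A direct analysis is a legitimate alternative, and it would even cover the change from the square to the triangular grid, which the citation glosses over. Your column invariant and your round count are fine. However, the step you yourself call the main obstacle is not proved, and the mechanism you sketch for it does not work. The sweep row is an $x$-section, which need not be connected. In a U-shape whose arms meet only at the bottom, every sweep row meets the arms in separate runs, so ``the sweep row remains bonded horizontally within itself'' is false in general. Also, a descended bottom is adjacent to the former position of only one of its two row-$i$ neighbours. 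Finally, your collision argument (``segments move only within their own column'') presupposes a bond schedule. In the joint-movement model the retained bonds determine what moves. Keeping a bond between the upper node of a descending amoebot and a stationary amoebot, or a diagonal bond between two descending bottoms, is itself a collision, because the orientations cannot be respected. So collision-freeness and connectivity are one and the same unresolved issue.

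What is missing is an invariant about the lower part: the sweep row together with all rows below it is connected, and the hanging segments are attached to it only by $y$-bonds.
\begin{itemize}
\item Choose coordinates so that the lower neighbours of $(x,i)$ are $(x,i-1)$ (beneath, along the $y$-axis) and $(x+1,i-1)$. A maximal run $(a,i),\dots,(b,i)$ of the sweep row touches the rows below only at positions $a,\dots,b+1$ of row $i-1$.
\item After the step, positions $a,\dots,b$ of row $i-1$ are all occupied, either originally or by the descended bottom of that column. Position $(b+1,i-1)$, if occupied, is horizontally adjacent to $(b,i-1)$. So everything the run used to link is now linked by a horizontal run in row $i-1$, and the invariant is preserved.
\item Your ``subtle case'' cannot occur: a run with no lower neighbour would contradict the invariant unless nothing lies below it.
\item For the mechanics, expand with all bonds kept on the upper node, so nothing moves. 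Before contracting, release every bond at the upper node of a descending amoebot except the $y$-bond to its own segment, together with the standing release of non-$y$ bonds above the sweep line. Contracting into the lower node then moves exactly that segment. The retained bonds among the stationary lower nodes and lower rows, plus the $y$-bonds, keep everything connected by the run argument.
\end{itemize}
As a by-product, the invariant at the last row shows that the result is in fact a histogram.
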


\begin{proof}
    We can simulate any (reconfiguration) algorithm for other models of modular robot systems if we can perform the same movement primitives.
    Hence, the correctness and runtime follows from \cite{DBLP:journals/comgeo/AloupisCDDFLORAW09}.
\end{proof}


Finally, the universal reconfiguration algorithm applies the Monotone2Line algorithm to reconfigure the monotone structure to a line segment (see \Cref{sec:monotone2line}).
By combining \Cref{lem:arbitrary2bounded,lem:bounded2monotone,th:monotone2line}, we obtain the following theorem.

\begin{theorem}
\label{th:arbitrary2line}
    The universal reconfiguration algorithm reconfigures a structure of {\sc Arbitrary} into a structure of {\sc LineSegment} in $O(\sqrt n \log n)$ rounds.
\end{theorem}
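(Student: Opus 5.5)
The theorem follows by composing the three subroutines of this section and adding their running times. The only real work is checking that each subroutine's output satisfies the precondition of the next one, and that the parameter $k=\sqrt n$ passed between them is consistent.

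First, I apply \Cref{lem:arbitrary2bounded} to the initial structure $S\in$ {\sc Arbitrary}. This takes $O(\sqrt n\log n)$ rounds and yields a structure $S_1\in$ {\sc Bounded($\sqrt n$)}. Concretely, every nonempty row of $S_1$ has at least $\sqrt n$ amoebots, so there are at most $\sqrt n$ rows. I will state explicitly that after the third phase of the last merge all amoebots are contracted and the structure is connected. Both facts are required by the class definitions, since all classes contain only contracted structures. They hold because every phase avoids connectivity conflicts: bonds are kept to the initially adjacent nodes, and whole rows contract together.

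Second, I apply \Cref{lem:bounded2monotone} with $k=\sqrt n$. This gives a $y$-monotone structure $S_2$ in $O(\sqrt n)$ rounds. The sweep line moves across only the at most $\sqrt n$ nonempty rows, with $O(1)$ rounds per step. I should check that the boundedness from step one is exactly in the direction the combing algorithm sweeps. Since the bound is on the number of $x$-sections (rows) and the sweep goes row by row while combing $y$-segments downward, this matches. I would also note that the output has only contracted amoebots, because each downward move is one expansion followed by a contraction.

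Third, I apply \Cref{th:monotone2line} to $S_2$, which produces a line segment in $16$ rounds. Summing gives $O(\sqrt n\log n)+O(\sqrt n)+O(1)=O(\sqrt n\log n)$ rounds. No collisions or connectivity conflicts occur, since each lemma guarantees this on its own and the subroutines run sequentially rather than in parallel.

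The main obstacle is the interface between the first two steps. I must make sure the Arbitrary2Bounded output really has at most $\sqrt n$ nonempty rows and not merely rows of size at least $\sqrt n$. This follows because there are at most $n/\sqrt n=\sqrt n$ rows with at least $\sqrt n$ amoebots each. I also need the iteration to terminate when at most one small row remains unpaired. A single leftover small row adds just one more row, giving $k\le \sqrt n+1$, which does not change the asymptotics. I would handle this by allowing the bound $O(\sqrt n)$ on $k$ rather than exactly $\sqrt n$.
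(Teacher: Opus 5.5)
Your proposal is correct and follows the paper's proof, which likewise chains \Cref{lem:arbitrary2bounded}, \Cref{lem:bounded2monotone} (with $k=\sqrt n$) and \Cref{th:monotone2line} and adds up $O(\sqrt n\log n)+O(\sqrt n)+O(1)$ rounds. Your hedge about a leftover small row is not needed, because the merging only stops once every row has at least $\sqrt n$ amoebots (a small row always has an adjacent nonempty row to pair with, since the rows of a connected structure are contiguous); allowing $k=\sqrt n+1$ would be harmless anyway.
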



\section{Spirals}
\label{sec:spirals}

In this section, we present two algorithms to reconfigure a spiral into a line.
In \Cref{sec:spiral:unrolling}, we describe a simple algorithm that preserves the order of the amoebots within the spiral.
In \Cref{sec:spiral:constant}, we show a faster algorithm that only requires constant time.


\subsection{Unrolling}
\label{sec:spiral:unrolling}

In this section, we present our unrolling algorithm to reconfigure a spiral into a line while preserving the order of the amoebots (see \Cref{fig:unrolling}).
The algorithm iteratively removes corners by aligning the outermost segment to the next segment.
For that, we simply apply the shearing primitive.
We terminate once all corners were removed.
We obtain the following theorem.

\begin{figure}[tb]
    \centering
    \includegraphics{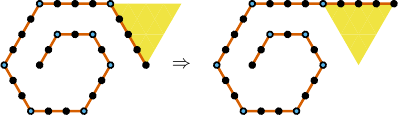}
    \caption{
        Unrolling algorithm.
        The blue amoebots indicate the corners in the initial amoebot structure.
        The double arrow indicates a phase consisting of $2$ rounds.
    }
    \label{fig:unrolling}
\end{figure}

\begin{theorem}
\label{th:unrolling}
    The unrolling algorithm reconfigures a structure of {\sc Spiral} into a structure of {\sc LineSegment} while preserving the order of the amoebots in $O(c)$ rounds where $c$ denotes the number of corners.
\end{theorem}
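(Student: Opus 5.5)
We argue by induction on the number of corners. The unrolling algorithm runs in $c$ phases. Each phase removes exactly one corner by applying the shearing primitive to the current outermost segment, aligning it with the next segment. By \Cref{lem:shearing}, each phase takes $2$ rounds, so the total is $2c = O(c)$ rounds.

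\textbf{Invariant.} Before phase $i$, the structure consists of a single $x$-, $y$-, or $z$-segment $L$ (the already unrolled part, initially the outermost segment) followed by the untouched inner spiral with $c-i+1$ corners. The shared corner amoebot joins $L$ to the next segment $s$. Shearing $L$ about this corner, which serves as the fixed connection point, onto the axis of $s$ extends $s$ by the amoebots of $L$. The result is a straight segment attached to a spiral with one fewer corner. The corners are $120^\circ$, so $L$ and $s$ lie on two different grid axes meeting at the corner. This is exactly the situation the shearing primitive handles. The alignment must extend $s$ outward, beyond the corner, rather than fold $L$ back over $s$. I would fix this by choosing the shearing direction to be the continuation of $s$ past the corner.

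\textbf{Order.} The shearing primitive keeps each amoebot in the same position along the segment; it only changes the segment's orientation. Because the segment is attached at the corner, the amoebots of $L$ end up on the extension of $s$ in their original order. Hence the path order $v_1,\dots,v_n$ is preserved after every phase, and after $c$ phases we obtain a line segment in that order.

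\textbf{Main obstacle: no collisions.} From the perspective of the fixed connection point, the shear sweeps $L$ through the triangle spanned by its initial and final positions (the yellow area of \Cref{fig:shearing}). I would show this triangle contains no amoebot of the remaining spiral. The spiral is simple and $L$ is its outermost part, so the inner spiral lies entirely on the side of $s$ toward the interior. The swept triangle is the $60^\circ$ wedge at the corner between $L$ and the outward extension of $s$, and it lies on the exterior side, since it is bounded by $L$ and by the outward ray of $s$.

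To make this rigorous, I would show that every later segment lies in the closed half-plane bounded by the line through $s$ on the interior side. A simple spiral with consistently turning $120^\circ$ corners keeps all later segments inside the convex region bounded by earlier turns, and the same convexity argument covers this. Apart from the corner itself, the swept triangle touches that half-plane only along the line of $s$ beyond the corner. That part is empty because $s$ starts at the corner and runs inward.

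Connectivity is maintained throughout, since the shearing primitive keeps the sheared segment connected and the corner amoebot remains bonded to $s$.
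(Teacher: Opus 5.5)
Your round count (one shear per corner, $2$ rounds each by \Cref{lem:shearing}, hence $2c$ rounds) and your order-preservation argument are exactly the paper's proof. The paper's proof does not discuss collisions at all.

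The collision argument you add, and flag as the main obstacle, has the geometry reversed. At every corner the spiral turns \emph{toward} its interior, so the outgoing segment $L$ already lies on the interior side of the line through $s$. The swept triangle is bounded by $L$ and by the continuation of $s$ beyond the corner $p$. It therefore lies entirely in the closed interior half-plane of the line through $s$, on the same side as the inner spiral. Your claim that the triangle meets this half-plane only along the extension of $s$ is false, and the line through $s$ separates nothing.

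The separating line you need is the line through $L$, the segment actually being sheared. The segment $s$ reaches $p$ from the interior side of that line, so its continuation beyond $p$ lies on the exterior side, and so does the whole swept triangle. The triangle meets the line through $L$ only in $L$ itself. The invariant to carry is therefore: the remaining spiral lies in the closed interior half-plane of the line through the current unrolled segment $L$, and it meets $L$ only at $p$.

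The fact you did state, that the inner spiral lies on the interior side of the line through $s$, is exactly this invariant for the \emph{next} phase, because the new unrolled segment lies on the line through $s$. Your ingredients work once they are shifted by one phase.

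This invariant is also where ``spirals outward'' is genuinely used. You need it for the initial outermost segment and again for the line through each subsequent segment. Simplicity plus consistent $60^\circ$ turns is not enough on its own. If the last segments of a simple, consistently turning path dip back inside the previous loop, the continuation of $s$ can run into an earlier loop, and the shear collides.
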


\begin{proof}
    Note that the shearing primitive preserves the order of the amoebots.
    Therefore, correctness follows from induction since each iteration removes one corner while preserving the order.
    Each iteration only consists of a single application of the shearing primitive.
    By \Cref{lem:shearing}, this requires $2$ rounds.
    The number of iteration is bounded by the number, $c$, of corners.
    Overall, the algorithm requires $O(c)$ rounds.
\end{proof}

\begin{remark}
    We cannot accelerate the algorithm by applying the shearing primitive on multiple segments in parallel because the number of shearing primitives the initially outermost segment has to participate in is linear in the number of corners.
    Furthermore, parallelization can lead to collisions.
\end{remark}



\subsection{Constant-Time Algorithm}
\label{sec:spiral:constant}


In this section, we present our Spiral2LineSegment algorithm.
We assume that the spiral consists of at least $5$ line segments.
Otherwise, we resort to the unrolling algorithm which gives us a constant-time reconfiguration in this case.
W.l.o.g., we assume that the spiral starts with a $y$-segment, spirals outwards in a clockwise direction, and ends with an $x$-segment.
We may perform up to $2$ iterations of the unrolling algorithm to enforce the latter assumption.


Similar to the universal reconfiguration algorithm, we go through a monotone structure.
Hence, the algorithm starts with the Spiral2Monotone algorithm which reconfigures the spiral into a $y$-monotone structure.
The idea is to form a base line through the spiral which allows us to cut the ``arcs'' of the spiral without disconnecting the structure.
In order to then obtain a monotone structure, we only need to align all segments that are not part of the base line, w.l.o.g., with the $y$-axis.
In the following, we explain both parts in more detail.


We first define a \emph{base line} as follows (see \Cref{fig:spiral:base_line}).
We start with the first three innermost segments.
From each left (right) corner, we go straight to the left (right) until we hit the next line segment if such line segment exists.
From there, we go to the next left (right) corner of that line segment.
If such a line segment does not exists, we stop the base line at the corner.
Note that the base line is $y$- and $z$-monotone by construction.

\begin{figure}[tb]
    \centering
    \includegraphics{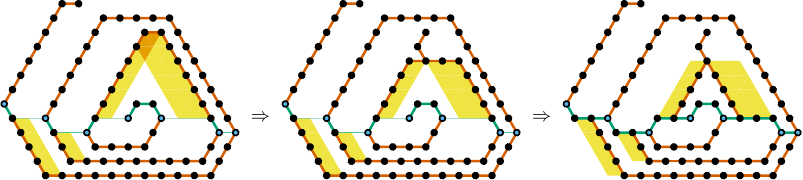}
    \caption{
        Base line construction.
        The blue amoebots indicate the left and right corners.
        The green path indicates the base line.
        The yellow areas indicate the parallelograms.
        The orange areas indicate the areas where parallelograms intersect.
        Each double arrow indicates multiple rounds.
    }
    \label{fig:spiral:base_line}
\end{figure}

In order to construct the base line, we need to fill the paths from the left (right) corners to the next line segments to the left (right).
Observe that the remaining base line is already occupied by amoebots.
The idea is to apply the parallelogram primitive on each hit line segment and its adjacent $x$-segment in parallel.
However, the parallelograms may intersect with each other or may degenerate because the $x$-segments are too short.
To resolve these issues, we can apply the trapezoid primitive.
We choose the starting point such that it becomes an acute corner of the parallelogram.


The base line allows us to disconnect each $x$-segment (except for the innermost one, which is part of the base line) at one of its endpoints without disconnecting the amoebot structure.
Note that some $x$-segments could been reduced to a single amoebot.
Similar to the triangle primitive (see \Cref{lem:triangle}), we will call all line segments that are not part of the base line ``arms''.
It remains to align the arms to the $y$-axis with the shearing primitive.
However, before we do that, we must ensure that we have enough space to perform all applications of the primitive.


We start by aligning the $z$-segments in the arms in the following $3$ phases (see \Cref{fig:spiral:z}).
In the first phase, we expand all amoebots on the base line connected to a $z$-segment of an arm horizontally.
In the second phase, we can apply the shearing primitive on each $z$-segment.
In the third phase, we contract all amoebots again.

\begin{figure}[tb]
    \centering
    \includegraphics{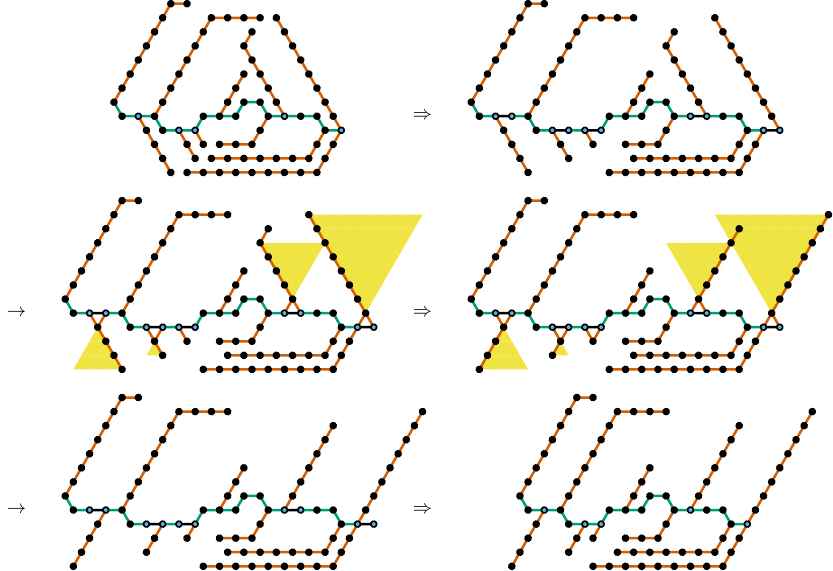}
    \caption{
        Alignment of the $z$-segments.
        The blue amoebots indicate the amoebots on the base line connected to a $z$-segment of an arm.
        The green path indicates the base line.
        The yellow areas indicate the shearing primitive.
        The second double arrow indicates $2$ rounds.
    }
    \label{fig:spiral:z}
\end{figure}


Next, we align the $x$-segments in the arms in the following $3$ phases (see \Cref{fig:spiral:x}).
In the first phase, we expand all left and right corners on the base line parallel to the $z$-axis, and all corners in the arms parallel to the $z$-axis.
In the second phase, we can apply the shearing primitive on each $x$-segment.
In the third phase, we contract all amoebots again.
We obtain the following lemma.

\begin{figure}[t!]
    \centering
    \includegraphics{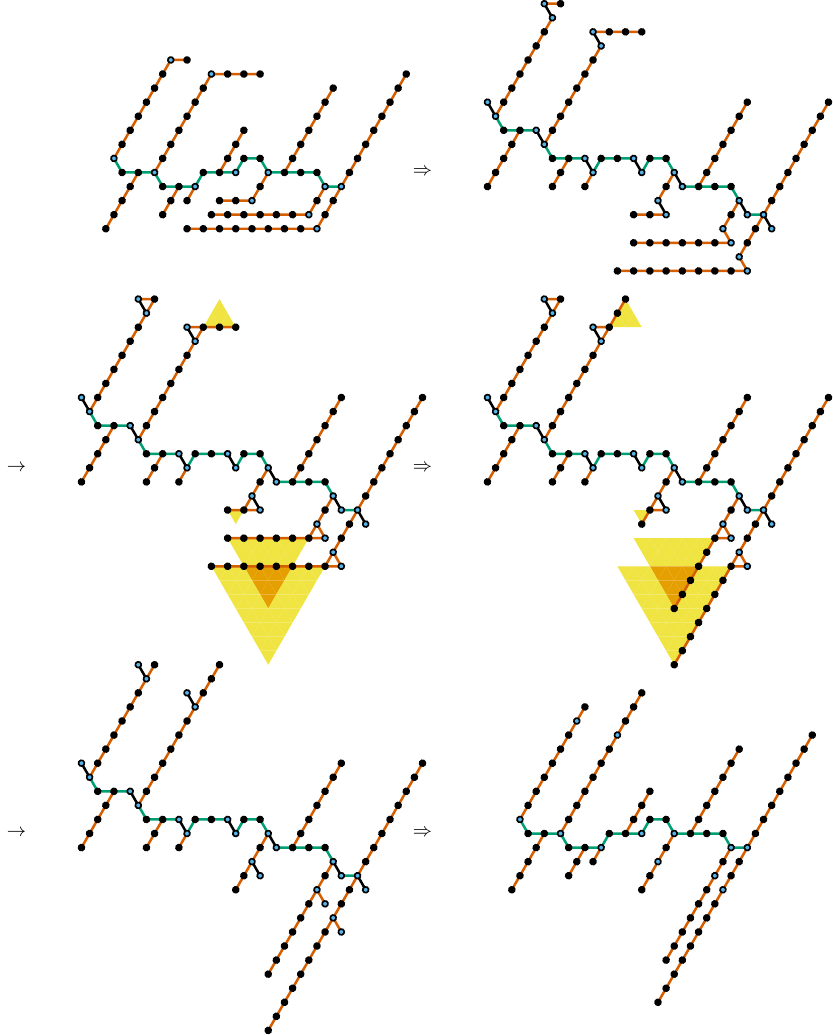}
    \caption{
        Alignment of the $x$-segments.
        The blue amoebots indicate all left and right corners on the base line and all corners in the arms.
        The green path indicates the base line.
        The yellow areas indicate the shearing primitive.
        The second double arrow indicates $2$ rounds.
    }
    \label{fig:spiral:x}
\end{figure}


\begin{lemma}
\label{lem:spiral2monotone}
    The Spiral2Monotone algorithm reconfigures a structure of {\sc Spiral} into a structure of {\sc Monotone} in $O(1)$ rounds.
\end{lemma}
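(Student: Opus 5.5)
The proof follows the three stages of the Spiral2Monotone algorithm: base-line construction, alignment of the $z$-segments, and alignment of the $x$-segments. Each stage consists of a constant number of primitive applications executed in parallel. By \Cref{lem:shearing,lem:parallelogram,lem:trapezoid}, each application takes $O(1)$ rounds, so the runtime is $O(1)$ as long as every arm and every base-line segment takes part in only a constant number of primitives. The at most $2$ preliminary unrolling iterations and the fallback for spirals with fewer than $5$ segments also take $O(1)$ rounds by \Cref{th:unrolling}. The substance of the proof is therefore showing that no collisions and no connectivity conflicts occur, and that the final structure is $y$-monotone.

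\textbf{Step 1 (base line).} I would show that the parallelograms, or the trapezoids replacing them, used to fill the horizontal paths from the left and right corners can be executed simultaneously without collisions. Each primitive stays inside its own parallelogram or trapezoid and keeps the relative positions of its connection points fixed. So it suffices to show two things. First, after replacing intersecting or degenerate parallelograms by trapezoids, the remaining regions are pairwise disjoint apart from shared connection points. Second, the regions do not contain amoebots of other segments.

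Because the spiral winds outward, each left corner's region lies between two consecutive windings on the left side, and similarly on the right. Regions on one side are nested and separated by the winding between them. Where two parallelograms overlap (the orange areas), I would merge them into one trapezoid whose starting point is the corner the base line must reach. The trapezoid primitive then occupies the longer base, which is the needed base-line piece, plus a connecting path. This is the step I expect to be the \textbf{main obstacle}. It requires a careful case analysis of the overlap and degeneration patterns, depending on how short the $x$-segments are, and a check that the amoebot count in each merged region matches exactly.

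\textbf{Step 2 (connectivity).} After the base line is built, the structure is the base line plus arms, each attached to it at a single point. The base line is $y$- and $z$-monotone by construction. Cutting each $x$-segment at one endpoint therefore leaves every arm connected to the base line through its remaining end, so no connectivity conflict arises.

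\textbf{Step 3 (alignment phases).} In the alignment of the $z$-segments, the horizontal expansion of the base-line attachment amoebots pushes apart the parts of the structure on either side. This creates, for each $z$-segment, an empty triangular region like the yellow area of \Cref{fig:shearing}. The simultaneous shearing primitives then act in disjoint triangles, and the contraction restores positions. The alignment of the $x$-segments works the same way: expanding corners parallel to the $z$-axis opens a triangle for each $x$-segment, the segments are sheared, and the amoebots contract again. For both phases I would argue via the fixed shearing area and the lateral push that the triangles are disjoint and empty.

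\textbf{Step 4 (monotonicity).} After these phases every arm is a $y$-segment hanging from the base line. Because the base line is $y$-monotone, each column consists of at most one base-line amoebot plus a contiguous $y$-segment attached to it. Hence every column is connected and the structure is in {\sc Monotone}. Counting the phases gives a constant number of rounds, which completes the proof.
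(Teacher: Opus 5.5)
Your runtime count and your base-line stage agree with the paper. The paper handles the base line in one line: the parallelogram and trapezoid primitives work in place. The case analysis you flag as the main obstacle is not where the paper's effort goes. The genuine gap is in Step 3.

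\textbf{The triangles are not disjoint.} Your claim that the lateral push ``creates, for each $z$-segment, an empty triangular region'' and that the shearings ``act in disjoint triangles'' is false in general. Take two consecutive $z$-arms hanging below the left part of the base line. They are parallel and can be only a constant distance apart (about the gap between windings), while their lengths grow with the winding. The inner arm shears from the $z$- to the $y$-direction toward the outer arm and sweeps an equilateral triangle whose side equals its length. Its target $y$-line crosses the outer arm's initial $z$-line at a depth comparable to the gap. A one-unit horizontal expansion cannot change this. The same happens for the nested top and bottom $x$-segments during the $x$-alignment.

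\textbf{What actually rules out collisions.} The missing idea is simultaneity. Arms of the same group are parallel and shear at the same speed, so at every moment they are translates of one another (with different lengths) and never meet, even though their swept regions overlap. The paper combines this with three further observations. The base line is monotone and therefore never lies in a shearing area. Each group shears away from its neighbouring group on the same side. The arms split into four groups (top or bottom of the left or right part), and groups on opposite sides can only collide by crossing the base line, which leaves only local checks.

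\textbf{The contraction phase needs an argument.} Your phrase ``the contraction restores positions'' hides a second point. After shearing, the arms occupy new nodes, and undoing the expansions moves them back toward the base line; you must show they do not land on it. The paper uses a specific property for this. Every connection point is either an original left or right corner or the obtuse corner of a (possibly degenerate) parallelogram. Hence, for an arm below the left part, the next base-line amoebot to the left of its connection point is not below it, and symmetrically for arms above the right part. This is what keeps the freshly sheared $y$-segments off the base line. For the $x$-segments, the paper also relies on the shearing not changing the connectivity at the connection points. Without these ingredients, Step 3 does not go through.

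\textbf{A smaller error in Step 4.} A column can contain several base-line amoebots, because the base line contains $y$-segments (for example, the first segment of the spiral in its center part). So ``at most one base-line amoebot'' per column is wrong. What you need is that each column meets the base line in a connected set, and that each aligned arm is attached at an end of that set. Both follow once collision-freeness is established.
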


\begin{proof}

    Since the base line is $y$-monotone by definition and we align all arms to the $y$-axis, correctness follows by construction if we can prove that no collision can occur.
    During the formation of the base line, we only apply the trapezoid and parallelogram primitive, which function in-place, i.e., no collision can occur.

    We can divide the base line into three parts:
    the center part that consists of the three innermost segments, the left part that consists of all amoebots to the left of the center part, and the right part that consists of all amoebots to the right of the center part.
    Furthermore, we can divide the arms into four groups:
    \begin{enumerate}
        \item Each arm attached to the top of the left part of the base line consists of a $y$-segment and possibly an $x$-segment.
        \item Each arm attached to the bottom of the left part of the base line consists of a $z$-segment and possibly a $y$-segment (resulting from the trapezoid primitive).
        \item Each arm attached to the bottom of the right part of the base line consists of a $y$-segment and possibly an $x$-segment.
        \item Each arm attached to the top of the right part of the base line consists of a $z$-segment and possibly a $y$-segment (resulting from the trapezoid primitive).
    \end{enumerate}
    Consider the connection points of the connection points of the arms with base line.
    Note that each connection point is either one of the original left or right corners, or the obtuse corner of one of the (possibly degenerate) parallelograms.
    This implies that following properties:
    \begin{enumerate}
        \item For each arm attached to the bottom of the left part of the base line, the next base line amoebot to the left of its connection point cannot be below the connection point.
        \item For each arm attached to the top of the right part of the base line, the next base line amoebot to the right of its connection point cannot be above the connection point.
    \end{enumerate}
    For the remaining algorithm, i.e., the alignment of the $x$- and $z$-segments, we need to show that neither the base line nor any group collides internally, and that no group collides with another group or the base line.
    In general, two groups cannot collide with each other if they are on different sides of the base line without colliding with the base line itself.
    Hence, it suffices to only consider whether a group can collide with the base line and the neighboring group on the same side of the base line.


    Consider the first phase of the alignment of the $z$-segments.
    The expansions move the arms apart.
    The base line cannot collide internally since it is monotone.
    Since we move the arms apart from each other and the form of the arms (two line segments with a $120^\circ$ degree angle), the groups cannot collide internally, with each other, or the base line.

    
    Consider the second phase of the alignment of the $z$-segments.
    We only have to argue that the second and fourth group of arms do not collide internally, with the base line, or with other groups since the base line and the other group do not move.
    In this phase, we apply the shearing primitive on the $z$-segments.
    Since these segments are initially parallel and shear with the same speed, they cannot collide internally.
    They also cannot collide with the base line since the base line is monotone and therefore, cannot be in the shearing area.
    Finally, they cannot collide with the neighboring group since they shear away from it.

    
    Consider the third phase of the alignment of the $z$-segments.
    The contractions move the arms that we moved apart together again.
    The base line cannot collide internally since it is monotone.
    Since the arms have either not changed or they have changed in the same way, the groups cannot collide internally or with each other.
    Furthermore, the second and fourth group cannot collide with the base line due to the aforementioned properties about their connection points.
    

    Consider the first phase of the alignment of the $x$-segments.
    For now, only consider the expansions on the base line.
    These expansions move the arms apart.
    The base line cannot collide internally since it is monotone.
    Since we move the arms apart from each other and the form of the arms (two line segments with a $120^\circ$ degree angle), the groups cannot collide internally, with each other, or the base line.
    Now, consider the additional expansions in the arm.
    Since we moved the arms apart, we have enough space to shift the $x$-segments in the first group upwards and the $x$-segments in the third group downwards.

    
    Consider the second phase of the alignment of the $x$-segments.
    We only have to argue that the first and third group of arms do not collide internally, with the base line, or with other groups since the base line and the other group do not move.
    In this phase, we apply the shearing primitive on the $x$-segments.
    Since these segments are initially parallel and shear with the same speed, they cannot collide internally.
    Finally, they cannot collide with the neighboring group or the base line since they shear away from the neighboring group or the base.

    
    Consider the third phase of the alignment of the $x$-segments.
    The contractions move the arms that we moved apart together again.
    The base line cannot collide internally since it is monotone.
    Since the arms have either not changed or they have changed in the same way, the groups cannot collide internally or with each other.
    Furthermore, the first and third group cannot collide with the base line since the shearing did not change the connectivity at the connection points.
    Altogether, no collision can occur, i.e., the algorithm is correct.


    It remains to analyze the runtime.
    By \Cref{lem:parallelogram,lem:trapezoid}, the formation of the base line requires $O(1)$ rounds.
    By \Cref{lem:shearing}, the alignment of the $x$- and $z$-segments requires $4$ rounds (including the set up and the restoration of the base line), respectively.
    Overall, the algorithm requires $O(1)$ rounds.
\end{proof}

By combining \Cref{lem:spiral2monotone,th:monotone2line}, we obtain the following theorem.

\begin{theorem}
\label{th:spiral2line}
    The Spiral2LineSegment algorithm reconfigures a structure of {\sc Spiral} into a structure of {\sc LineSegment} in $O(1)$ rounds.
\end{theorem}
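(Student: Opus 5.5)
The plan is to obtain \Cref{th:spiral2line} by composing two earlier results. First I would run the Spiral2Monotone algorithm, which by \Cref{lem:spiral2monotone} turns any structure of {\sc Spiral} into a $y$-monotone structure in $O(1)$ rounds. Then I would run the Monotone2LineSegment algorithm, which by \Cref{th:monotone2line} turns any structure of {\sc Monotone} into one of {\sc LineSegment} in $16$ rounds. The total is $O(1)+16=O(1)$ rounds.

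Three details need checking so that the composition is legitimate.

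\textbf{Small spirals.} If the spiral has fewer than $5$ segments, the Spiral2Monotone algorithm does not apply. In that case we use the unrolling algorithm instead. By \Cref{th:unrolling} it needs $O(c)$ rounds, and here $c \le 4$, so this is $O(1)$.

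\textbf{Normalization.} Reaching the assumed orientation (start with a $y$-segment, spiral clockwise outwards, end with an $x$-segment) costs at most $2$ unrolling iterations, i.e.\ at most $4$ rounds. Every class involved is closed under rotation and reflection, so the w.l.o.g.\ choices are harmless.

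\textbf{Hand-off between phases.} The output of Spiral2Monotone must be a valid input to Monotone2LineSegment. That means it should consist only of contracted amoebots, be connected, and be $y$-monotone. Connectivity and the absence of collisions are already part of \Cref{lem:spiral2monotone}. The phases of that algorithm end by contracting every amoebot they had expanded, so the final structure is contracted. If some amoebot did remain expanded, one extra round of contractions would fix it. This would have to be argued carefully, since contracting could pull arms into the base line, but the collision argument in \Cref{lem:spiral2monotone} already covers exactly this contraction phase.

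Monotone2LineSegment starts by rebuilding its own bond choice: it keeps one horizontal bond between each pair of adjacent columns. This choice is available because the adjacency closure restores all bonds after every round.

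I expect the hand-off step to be the only real obstacle. The constant bound itself follows immediately once both lemmas are accepted.
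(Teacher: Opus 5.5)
Your proposal is correct and matches the paper's proof, which obtains the theorem by combining \Cref{lem:spiral2monotone} with \Cref{th:monotone2line}; the fallback to unrolling for fewer than $5$ segments and the at most two normalizing unrolling iterations are also part of the paper's description of the algorithm. One small correction to a remark you do not actually need: the adjacency closure does not guarantee that a horizontal bond exists between two adjacent columns (that case is handled by the preprocessing shift in Monotone2LineSegment), but since \Cref{th:monotone2line} applies to every monotone structure, your composition is unaffected.
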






\section{Open Problems}
\label{sec:conclusion}


In this paper, we have shown how to reconfigure any structure into another in sublinear time.
It is still open whether a polylogarithmic time universal reconfiguration algorithm is possible.
One way to achieve such algorithm could be an algorithm that reconfigures a structure of {\sc Arbitrary} into a structure of {\sc Bounded($\operatorname{polylog} n$)}, which would reduce the runtime of our universal reconfiguration algorithm to polylogarithmic time.
Another way would be an algorithm that reconfigures a structure of {\sc Arbitrary} into a structure of {\sc Rhombical} for which we already know how to reconfigure it to any other structure of {\sc Rhombical} in logarithmic time \cite{DBLP:conf/isaac/AloupisCDLAW08,DBLP:journals/arobots/PadalkinKS25}.
%
%
We have also presented an algorithm that reconfigures a spiral into a line segment in constant time.
It would be interesting to investigate which other classes of structures can be reconfigured into a line segment in constant time.
A next step could be to consider arbitrary paths and trees.


Furthermore, we left the design of distributed algorithms for future work.
Recall that local communication between amoebots leads to a natural lower bound of $\Omega(D)$, where $D$ is the diameter of the structure.
Therefore, other ways of communication are required, e.g., the \emph{reconfigurable circuit extension} \cite{DBLP:journals/jcb/FeldmannPSD22} which enables fast global communication between amoebots.
Previous work has shown that reconfigurable circuits enable polylogarithmic-time solutions to various stationary problems \cite{DBLP:conf/wdag/ArtmannPS25,DBLP:journals/jcb/FeldmannPSD22,DBLP:conf/podc/PadalkinS24,DBLP:journals/nc/PadalkinSW24}.
Almalki \emph{et al.} \cite{DBLP:conf/sss/AlmalkiGMP25} were the first to leverage them for rapid, distributed transformations in the growth model.
It is open whether they can also prove of use in the amoebot model with joint movements.


\bibliography{reference}



\end{document}